\theoremstyle{definition}
\newtheorem{definition}{Definition}
\newtheorem{proposition}{Proposition}
\newcounter{para}
\renewcommand{\thepara}{\roman{para}}
\newcommand\myparagraph{\ \newline\par\refstepcounter{para}\textbf{(\thepara)}\space\textbf}
\begin{document}
\title{Graph edit distance : a new binary linear programming formulation}
\author{Julien Lerouge, Zeina~Abu-Aisheh, Romain Raveaux\\Pierre H\'eroux, and S\'ebastien Adam}
\maketitle

\begin{abstract}

Graph edit distance (GED) is a powerful and flexible graph matching paradigm that can be used to address different tasks in structural pattern recognition, machine learning, and data mining. In this paper, some new binary linear programming formulations for computing the exact GED between two graphs are proposed. A major strength of the formulations lies in their genericity since the GED can be computed between directed or undirected fully attributed graphs (i.e. with attributes on both vertices and edges). Moreover, a relaxation of the domain constraints in the formulations provides efficient lower bound approximations of the GED. A complete experimental study comparing the proposed formulations with 4 state-of-the-art algorithms for exact and approximate graph edit distances is provided. By considering both the quality of the proposed solution and the efficiency of the algorithms as performance criteria, the results show that none of the compared methods dominates the others in the Pareto sense. As a consequence, faced to a given real-world problem, a trade-off between quality and efficiency has to be chosen w.r.t. the application constraints. In this context, this paper provides a guide that can be used to choose the appropriate method.

\end{abstract}




\section{Introduction}
\label{sec:intro}

Graphs are data structures able to describe complex entities through their elementary components (the vertices of the graph) and the relational properties between them (the edges of the graph). For attributed graphs, both vertices and edges can be characterized by attributes that can vary from nominal labels to more complex descriptions such as strings or feature vectors, leading to very powerful representations. As a consequence of their inherent genericity and their ability to represent objects as composition of elementary entities, and thanks to the general improvement of computing power, graph representations have become more and more popular in many application domains such as computer vision, image understanding, biology, chemistry, text processing or pattern recognition. With this emergence of the use of graphs, new algorithmic issues have arised such as graph mining \cite{graphMining_Zou}, graph clustering \cite{graphClusteringChen_Djidjev} or graph classification \cite{graphClassification_Wu}. 

A major issue related to the graph-based algorithms mentioned above is the computation of a (dis)similarity measure between two graphs. A huge number of algorithms have been proposed in the literature to solve this problem, which is particularly crucial for machine learning issues. They can be categorized as \textit{embedding-based methods} vs. \textit{matching-based methods}. 

In \textit{embedding-based methods}, the key-idea is to project the input graphs to be compared into a vector space  in order  to benefit from the distance computation  designed  for  vectorial  representations. Among existing approaches, some are based on an implicit projection, through the use of graph kernels \cite{Kernel_Foggia,Kernel_Raviv} whereas other methods make the projection explicit, through the computation of a feature vector for each graph to be compared. The features can result for example from frequencies of appearance of specific sub-structures \cite{ExplicitEmbedding_Kramer,ExplicitEmbedding_Sidere} or from a spectral analysis of the graphs \cite{SpectralEmbedding_Ren,SpectralEmbedding_Shi}. Embedding-based methods are generally computationally effective since they do not involve a complete matching process. On the other hand, they do not take into account the complete relational properties and do not provide the matching between vertices and edges.  

A second way to compute the dissimilarity between two graphs consist in using \textit{matching-based methods}. In such a case, computing the similarity between two graphs requires the computation and the quanfitication of the "best" matching between them. Different kinds of matching algorithm have been used for such a computation. They differ according to the kind of constraints that must be respected and to those that can be relaxed. As an example, maximum common subgraph and/or minimum common supergraphs have been used in \cite{MCS_Bunke,MCS_Fernandez} to derive a graph distance metric. Since exact isomorphism rarely occur in pattern analysis applications, another interesting class of matching problem for similarity evaluation is the error tolerant graph matching problem. A graph matching is said to be error-tolerant when the matching tolerates differences on the topology and/or the  attributes of the vertices and the edges. Adjacency matrix eigendecomposition \cite{Matching_Umeyama} or graduated assignment methods \cite{Matching_Gold,Matching_Wyk}  are examples of methods that  have been used to tackle this problem. Another well known error-tolerant \textit{matching-based method} that can be used to compute a dissimilarity measure between two graphs is the graph edit distance (GED). In this method, the graph matching process and the dissimilarity computation are linked through the introduction of a set of graph edit operations (e.g. node insertion, node deletion). Each edit operation is characterized by a cost, and the graph edit distance is the total cost of the least expensive sequence of edit operations that transforms one graph into the other one. A major advantage of graph edit distance is that it is a dissimilarity measure for arbitrarily structured and arbitrarily  attributed  graphs. In contrast with other approaches, it does not suffer from any restrictions and can be applied to any type of graph, including hypergraphs \cite{Bunke:1983:IGM:2305869.2306079}. Graph edit distance has been used to address various graph classification problems \cite{DBLP:journals/cviu/RaveauxAHT11,Riesen:2007:GEV:1769371.1769413,Bunke2012811}. However, a main drawback of graph edit distance is its computational complexity which is exponential in the number of nodes of the involved graphs. Consequently, computation of graph edit distance is feasible for graphs of rather small size only. In order to overcome this restriction, some number of fast but suboptimal methods have been proposed in the literature (e.g. \cite{DBLP:conf/gbrpr/RiesenNB07, DBLP:journals/ijprai/FankhauserRBD12, DBLP:journals/ivc/RiesenB09, DBLP:conf/sspr/NeuhausRB06, 10.1109/34.862201, DBLP:journals/prl/RaveauxBO10}). On the other hand, only few optimal methods have been proposed to postpone the graph size restriction \cite{DBLP:conf/mlg/RiesenFB07, DBLP:conf/gbrpr/FankhauserRB11}.
 
This paper tackles the problem of graph edit distance computation by proposing two main contributions. The first one consists in solving the graph edit distance problem with binary linear programming. More precisely, two original exact formulations of the GED are provided.
They are very general, since they are able to compute the GED between directed or undirected fully attributed graphs (i.e. with attributes on both vertices and edges). Furthermore, a relaxation of the domain constraints in the formulations provides efficient lower bound approximations of the GED. On the basis of these formulations, the second contribution is a very complete comparative study where eight algorithms for exact and approximate graph edit distances are compared on a set of graph datasets. By considering both the quality of the proposed solution and the efficiency of the algorithms, we show that none of the compared methods dominates the others in the Pareto sense. As a consequence, faced to a given real-world problem, a trade-off between quality and efficiency has to be chosen w.r.t. the application constraints. In this context, this paper provides a guide that can be used to choose the appropriate method.

This paper is organized as follows: Section 2 presents the important definitions necessary for introducing our formulations of the GED. Then, section 3 reviews existing approaches for computing GED with exact and inexact methods. Section 4 describes the proposed binary linear programming formulations. Section 5 presents the experiments and analyses the obtained results. Section 6 provides some concluding remarks.

\section{Problem statement}
\label{sec:pbstatement}


In this paper, we are interested in computing the graph edit distance between attributed graphs. 


\begin{definition}
An attributed graph $G$ is a 4-tuple $G=(V, E, \mu, \xi)$, where :
\begin{itemize}
 \item $V$ is a set of vertices,
 \item $E$ is a set of edges, such that $\forall e=(i,j)  \in E, i \in V \textrm{ and } j \in V$,
 \item $\mu : V \rightarrow L_V$ is a vertex labeling function which associates the label $\mu(v)$ to all vertices $v$ of $V$,
  where $L_V$ is the set of possible labels for the vertices,
 \item $\xi : E \rightarrow L_E$ is an edge labeling function which associates the label $\xi(e)$ to all edges $e$ of $E$,
  where $L_E$ is the set of possible labels for the edges.
\end{itemize}
\label{def:attributedgraph}
\end{definition}

The vertices (resp. edges) label space $L_V$ (resp. $L_E$) may be composed of any combination of numeric, symbolic or string attributes.

A graph $G$ is said simple if it has no loop (an edge that connects a vertex to itself) and no multiedge (several edges between the same vertices). In this case, $E \subseteq \{(i,j) \in V \times V / i \neq j \}$ and an edge can be unambiguously designated by the pair of edges it connects. Otherwise, $G$ is a multigraph and $E$ is a multiset.
A graph $G$ is said undirected if the relation E is symmetric, i.e. if its edges have no orientation. In this case, $\forall (i, j) \in E, (j, i) \in E \textrm{ and } (i,j) = (j,i)$. Otherwise, $G$ is a directed graph.



Definition \ref{def:attributedgraph} allows us to handle arbitrarily structured graphs (directed or undirected, simple graphs or multigraphs) with unconstrained labeling functions.

Many applications using graph-based representations need to evaluate how two graphs are similar, or how they differ. The graph edit distance is commonly used to measure the dissimilarity between two graphs. Graph edit distance is an error-tolerant graph matching method. It defines the dissimilarity of two graphs by the minimum amount of distortion that is needed to transform one graph into another  \cite{Bunke:1983:IGM:2305869.2306079}.

\begin{definition}
  The graph edit distance $d(.,.)$ is a function 
  \begin{eqnarray*}
    d & : & \mathcal{G} \times \mathcal{G} \rightarrow \mathbb{R}^+ \\
    & & (G_1,G_2) \mapsto d(G_1,G_2) =\\
    & & \min_{o=(o_1,\ldots,o_k)\in \Gamma(G_1,G_2)} \sum_{i=1}^kc(o_i) 
  \end{eqnarray*}
\end{definition}
\noindent where $G_1=(V_1,E_1, \mu_1, \xi_1)$ and $G_2=(V_2,E_2, \mu_2, \xi_2)$ are two graphs from the set $\mathcal{G}$ and $\Gamma(G_1,G_2)$ is the set of all edit paths $o=(o_1,\ldots,o_k)$ allowing to transform $G_1$ into $G_2$. An elementary edit operation $o_i$ is one of vertex substitution ($v_1 \rightarrow v_2$), edge substitution ($e_1 \rightarrow e_2$), vertex deletion ($v_1 \rightarrow \epsilon$), edge deletion: ($e_1 \rightarrow \epsilon$), vertex insertion ($ \epsilon \rightarrow v_2$) and edge insertion ($ \epsilon \rightarrow e_2$) with $v_1 \in V_1$, $v_2 \in V_2$, $e_1 \in E_1$ and $e_2 \in E_2$. $\epsilon$ is a dummy vertex or edge which is used to model insertion or deletion. $c(.)$ is a cost function on elementary edit operations $o_i$ that satisfies 
\begin{itemize}
\item $c(v_1 \rightarrow v_2) \leq c(v_1 \rightarrow v) + c(v \rightarrow v_2)$
\item $c(e_1 \rightarrow e_2) \leq c(e_1 \rightarrow e) + c(e \rightarrow e_2)$
\item $c(v_1 \rightarrow \epsilon) \leq c(v_1 \rightarrow v) + c(v \rightarrow \epsilon)$
\item $c(e_1 \rightarrow \epsilon) \leq c(e_1 \rightarrow e) + c(e \rightarrow \epsilon)$
\item $c(\epsilon \rightarrow v_2) \leq c(\epsilon \rightarrow v) + c(v \rightarrow v_2)$
\item $c(\epsilon \rightarrow e_2) \leq c(\epsilon \rightarrow e) + c(e \rightarrow e_2)$
\end{itemize}

Moreover, in order to guarantee the symmetry property ($d(G_1,G_2)=d(G_2,G_1)$), the reverse edit path should result in the same cost. So, these costs have to be defined in a symmetric manner so that $c(v_1 \rightarrow v_2) = c(v_2 \rightarrow v_1)$, $c(e_1 \rightarrow e_2) = c(e_2 \rightarrow e_1) $, $c(v \rightarrow \epsilon) = c(\epsilon \rightarrow v)$ and $c(e \rightarrow \epsilon) = c(\epsilon \rightarrow e)$.


When the graph edit distance is computed between unlabeled graphs, the identity property ($d(G_1,G_2)=0 \Leftrightarrow G_1=G_2$) imposes that the substitution costs are equals to 0. The insertion and deletion costs are then set to a constant. 
In the more general case where the graph edit distance is computed between attributed graphs, edit costs are generally defined as functions of vertices (resp. edges) labels. More precisely, substitution costs are defined as a function of the labels of the substituted vertices (resp. edges), whereas insertion and deletion are penalized with a value linked to the label of the inserted/deleted vertex (resp. edge).

\begin{equation*}
  \begin{array}{l}
    c(v_1 \rightarrow v_2) = c(v_2 \rightarrow v_1) = f_v(\mu_1(v_1),\mu_2(v_2)) \\
    c(e_1 \rightarrow e_2) = c(e_2 \rightarrow e_1) = f_e(\xi_1(e_1),\xi_2(e_2)) \\
    c(v \rightarrow \epsilon) = c(\epsilon \rightarrow v) = g_v(\mu(v)) \\
    c(e \rightarrow \epsilon) = c(\epsilon \rightarrow e) = g_e(\xi(e))
  \end{array}
\end{equation*}

\section{Related work}
\label{sec:related}
The graph edit distance, which is the minimum cost associated to an error correcting graph matching, has been the subject of many studies in the literature. Several papers propose surveys of these works \cite{citeulike:809181,livi13,Gao:2010:SGE:1714377.1714380}. They distinguish exact approaches from approximations. Indeed, as stated in \cite{Zeng09comparingstars:}, the graph edit distance problem is NP-hard. It is then prohibitively difficult to compute the graph edit distance for large graphs, and the literature reports exact methods to compute GED only for small graphs, while approximations by means of upper and lower bounds computation are often used for larger graphs.

\subsection{Exact approaches}

A first family of exact computation of the graph edit distance is based on the widely known A$^*$ algorithm. This algorithm relies on the exploration of the tree of solutions. In this tree, each node corresponds to a partial edition of the graph. A leaf of the tree corresponds to an edit path which transforms one of the input graphs into the other one. The exploration of the tree is guided by developing most promising ways on the basis of an estimation of the graph edit distance. For each node, this estimation is the sum of the cost associated to the partial edit path and an estimation of the cost for the remaining path, the latter being given by a heuristic. Provided that the estimation of the future cost is lower than or equal to the real cost, an optimal path from the root node to a leaf node is guaranteed to be found \cite{4082128}. A simple way to fulfill this constraint would be to set the estimation of the future cost to zero, but this may lead to explore the whole tree of solutions. Indeed, the smaller the difference between the estimation and the real future cost, the fewer nodes will be expanded by the A* algorithm. However, the other extreme which consists in computing the real cost for the remaining edit path would require an exponential time. The different A*-based methods published in the literature mainly differ in the implemented heuristics for the furture cost estimation which correspond to different tradeoffs between approximation quality and their computation time \cite{DBLP:conf/mlg/RiesenFB07,Fischer2015331}.

In an other family of algorithms, the graph edit distance is computed by solving a binary linear program. Almohamad and Duffuaa \cite{Almohamad:1993:LPA:628301.628477} propose a binary linear programming formulation of the weighted graph matching problem which aims at determining the permutation matrix minimizing the $L_1$ norm of the difference between adjacency matrix of the input graph and the permuted adjacency matrix of the target one. Later, Justice and Hero \cite{Justice:2006:BLP:1155317.1155424} also proposed a BLP formulation of the graph edit distance problem aiming at determining the permutation matrix which minimizes the cost of transforming $G_1$ into $G_2$, with $G_1$ and $G_2$ two unweighted and undirected graphs. The criterion to be minimized (see eq. \ref{eq:blp}) takes into account costs for matching vertices, but the formulation does not integrate the ability to process graphs that carry labels on their edges. 
\begin{equation}
  d(G_1,G_2)=\min_P\sum_{i=1}^n\sum_{j=1}^nC_{i,j}P_{i,j} + \frac{1}{2}\left\|A_1-PA_2P^T\right\|_1\label{eq:blp}
\end{equation}
\noindent where $C_{i,j}$ is the cost for matching the $i^{th}$ vertex in $G_1$ and the $j^{th}$ vertex in $G_2$. $A_1$ (resp. $A_2$) is the adjacency matrix of $G_1$ (resp. $G_2$), and $P$ is an orthogonal permutation matrix such that $PP^T=P^TP=I$. A mathematical transformation is used to transform this non linear optimization problem into a linear one. The modeling of graphs by means of adjacency matrix restricts the formulation to the processing of simple graphs.

\subsection{Approximations}

Considering that exact computation of graph edit distance can be performed in a reasonable time only for small graphs, many researchers have focused their effort on the computation of approximations in polynomial time. For example, in their paper \cite{Justice:2006:BLP:1155317.1155424}, Justice and Hero have proposed a lower bound of the graph edit distance which can be computed in $\mathcal{O}(n^7)$ by extending the domain of variables in $P$ from $\{0,1\}$ to $[0,1]$. In the same paper, they also proposed an upper bound that can be computed in $\mathcal{O}(n^3)$ by determining vertex correspondance based only on the vertex term of eq. \ref{eq:blp} thanks to the Hungarian method (also called Munkres assignment algorithm). The remaining part of the cost is deduced from the permutation matrix determined in the previous step. In a quite similar way, Riesen \emph{et al.} \cite{DBLP:journals/ivc/RiesenB09} propose to first exploit a cost matrix for vertex substitution, insertion or deletion in order to determine the vertex assignment thanks to the Munkres algorithm with a complexity of $\mathcal{O}((n_1+n_2)^3)$ in the number of nodes $n_1=|V_1|$ and $n_2=|V_2|$ of the involved graphs. The vertex assignment is then used to infer an edit path which transforms one graph into the other and whose associated cost is an upper bound of the graph edit distance.

In \cite{DBLP:conf/sspr/NeuhausRB06}, Neuhaus \emph{et al.} propose another approximation based on $A^*$-based method. The first one, called $A^*$-\textsc{beamsearch}, propose to prune the tree of solutions by limiting the number of concurrent partial solutions to the $q$ most promising ones. At the end of the algorithm, a valid edit path and its associated cost are provided, but there is no guarantee that it corresponds to the optimal one, since the latter may have been eliminated in earlier steps of the algorithm. The parameter $q$, corresponding to the number of concurrent partial solutions to keep, allows to manage the trade-off between combinatorial cost and quality of the approximation. This method provides an upper bound of the exact graph edit distance. In the same paper, a method called $A^*$-\textsc{pathlength} proposes to speed up the access to a leaf node in the tree of solutions by giving a higher exploration priority to long partial edit paths. This strategy is motivated by the observation that first assignments are the most computationally expensive and that they are rarely called into question.

More recently, in \cite{DBLP:dblp_conf/sspr/RiesenFB14}, the vertex assignment computed by means of bipartite graph matching is used as an initialization step for a genetic algorithm which attempts to improve the quality of the approximation. Indeed, from any vertex assignment, it is possible to derive an edit path and finally compute its cost \cite{DBLP:journals/ivc/RiesenB09}. The vertex assignment which is optimal in terms of vertex subtitution is not always optimal for the whole edit path. However, it has been observed that it may only differ with few assignments. In the proposed genetic algorithms, population individuals correspond to different vertex mappings. The initial population is generated by deriving mappings that are mutated version of the one that has been determined by the hungarian algorithm. The probability of a vertex mapping to be selected is linked to the vertex substitution cost. The lower the corresponding edit distance, the best the individual fits the objective function. The genetic algorithm iterates by selecting and mixing several mappings.

Fischer \emph{et al.} \cite{Fischer2015331} propose to integrate in the A* algorithm a heuristic based on a modifed Hausdorff distance. Given two graphs $G_1$ and $G_2$ and $C$ a cost matrix for vertex substitution\footnote{It also integrates vertex insertion and deletion costs}, the Hausdorff Edit Distance is defined by 

$$
HED(G_1,G_2,C) = \sum_{u \in V_1} \min_{v \in V_1 \cup \epsilon} C(u,v) +  \sum_{v \in V_2} \min_{u \in V_1 \cup \epsilon} C(v,u)
$$
\noindent which can be interpreted as the sum of distances to the most similar vertex in the other graph. This distance is computed in a time complexity of $\mathcal{O}(n_1.n_2)$. 

Graph edit distance approximations have also been proposed in a probabilistic framework \cite{10.1109/34.862201,Wilson:1997:SMD:262631.262639} where the objective is to find the vertex assignment that maximizes the a posteriori probability considering vertex attributes. However, unlike the methods formerly presented, the corresponding heuristics are unbounded and can not be exploited by branch and bound algorithms to prune the tree of solutions or to efficiently prioritize its exploration in the A* algorithm.

\section{Graph edit distance using binary linear programming}
\label{sec:proposal}
In this article, the graph edit distance problem is modeled by a Binary Linear Program (BLP). A BLP is a restriction of integer linear programming (ILP) where the variables are binary. Hence, its general form is as follows:
\begin{subequations}
\begin{align}
\min_{x}&~{c^T x} \label{ilp:o}\\
\text{subject to }&
Ax \leq b \label{ilp:c1}\\
&
 x \in \{0,1\}^n \label{ilp:c2}
\end{align}
\end{subequations}
where $c \in \mathbb{R}^{n}, A \in \mathbb{R}^{n \times m} \text{ and } b \in \mathbb{R}^{m}$ are data of the problem.
A solution of this optimization problem is a vector $x$ of $n$ binary variables. $A$ is used to express linear inequality constraints \eqref{ilp:c1}.
If the program is feasible, i.e. if it has such solutions, then the optimal solution is the one that minimizes the objective function \eqref{ilp:o} and
respects constraints \eqref{ilp:c1} and \eqref{ilp:c2}. The objective function $c^T x$ is a linear combination of variables of $x$ weighted by the components of the vector $c$. 

In this section, we present the two formulations we wrote for the GED. Then, we present how the formulations are solved. Finally, we discuss how the relaxation of the formulations can provide a lower bound of the GED.  

\subsection{Modelling the GED problem}

In this subsection, we first define in \ref{subsubsection:variables} the variables used for formulating the GED as a BLP. Then, we describe in  \ref{subsubsection:objective} the objective function of the program and in \ref{subsubsection:constraints} the linear constraints that must be satisfied to correctly match the two graphs.

\subsubsection{Variable and cost functions definitions}
\label{subsubsection:variables}

Our goal is to compute the graph edit distance between two graphs $G_{1} = (V_1,E_1,\mu_1,\xi_1) $ and $G_{2} = (V_2,E_2,\mu_2,\xi_2)$. In the rest of this section, for the sake of simplicity of notations, we consider that the graphs $G_{1}$ and $G_{2}$ are simple directed graphs. However, let us emphasize that the formulations given in this section can be applied without modification to multigraphs, and that the undirected case only needs some slight modifications (please refer to appendix \ref{appendix:undirected}).

In the GED definition provided in section \ref{sec:pbstatement}, the edit operations that are allowed to match the graphs $G_{1}$ and $G_{2}$ are (i) the substitution of the label of a vertex (respectively an edge) of $G_{1}$ with the label of a vertex (resp. an edge) of $G_{2}$, (ii) the deletion of a vertex (or an edge) from $G_{1}$ and (iii) the insertion of a vertex (or an edge) of $G_{2}$ in $G_{1}$. For each type of edit operation, we define a set of corresponding binary variables:

\begin{itemize}
 \item $\forall (i,k) \in V_1 \times V_2,\\ 
   x_{i,k} = \left\{\begin{array}{l}
         1 \text{ if $i$ is substituted with $k$,}\\
         0 \text{ otherwise.}\\ \end{array} \right.$
 \item $\forall (ij,kl) \in E_1 \times E_2, \\
   y_{ij,kl} = \left\{\begin{array}{l}
         1 \text{ if $ij$ is substituted with $kl$,}\\
         0 \text{ otherwise.}\\ \end{array} \right.$
 \item $\forall i \in V_1, u_i = \left\{\begin{array}{l}
         1 \text{ if $i$ is deleted from } G_{1}\\
         0 \text{ otherwise.}\\ \end{array} \right.$
 \item $\forall ij \in E_1, e_{ij} = \left\{\begin{array}{l}
         1 \text{ if $ij$ is deleted from } G_{1}\\
         0 \text{ otherwise.}\\ \end{array} \right.$
 \item $\forall k \in V_2, v_k = \left\{\begin{array}{l}
         1 \text{ if $k$ is inserted in } G_{1}\\
         0 \text{ otherwise.}\\ \end{array} \right.$
 \item $\forall kl \in E_2, f_{kl} = \left\{\begin{array}{l}
         1 \text{ if $kl$ is inserted in } G_{1}\\
         0 \text{ otherwise.}\\ \end{array} \right.$
\end{itemize}

Using these notations, we define an edit path between $G_{1}$ and $G_{2}$ as a 6-tuple $(\mathbf{x}, \mathbf{y}, \mathbf{u}, \mathbf{v}, \mathbf{e}, \mathbf{f})$ where $\mathbf{x} = (x_{i,k})_{(i,k) \in V_1 \times V_2}$, $\mathbf{y} = (y_{ij,kl})_{(ij,kl) \in E_1 \times E_2}$, $\mathbf{u} = (u_i)_{i \in V_1}$, $\mathbf{e} = (e_{ij})_{ij \in E_1}$, $ \mathbf{v} = (v_k)_{k \in V_2}$ and $\mathbf{f} = (f_{kl})_{kl \in E_2}$.


In order to evaluate the global cost of an edit path, elementary costs for each edit operation must be defined. We adopt the following notations for these costs:

\begin{itemize}
 \item $\forall (i,k) \in V_1 \times V_2, c(i \rightarrow k)$ is the cost of substituting the vertex $i$ with $k$,
 \item $\forall (ij,kl) \in E_1 \times E_2, c(ij \rightarrow kl)$ is the cost of substituting the edge $ij$ with $kl$,
 \item $\forall i \in V_1, c(i \rightarrow \epsilon)$ is the cost of deleting the vertex $i$ from $G_{1}$,
 \item $\forall ij \in E_1, c(ij \rightarrow \epsilon)$ is the cost of deleting the edge $ij$ from $G_{1}$,
 \item $\forall k \in V_2, c(\epsilon \rightarrow k)$ is the cost of inserting the vertex $k$ in $G_{1}$,
 \item $\forall kl \in E_2, c(\epsilon \rightarrow kl)$ is the cost of inserting the edge $kl$ in $G_{1}$.
\end{itemize}

These cost functions traditionally depend on the labels of the vertices and of the edges. 
Table \ref{tab:synthnotations} gives a summary of the notations.

\begin{table}[!h]
\centering
\begin{tabular}{|c|c|c|c|c|c|}
\hline
Type & Edit operation & $G_1$ & $G_2$ & Cost & Variable \\ \hline
Vertex & Substitution & $i$ & $k$ & $c(i \rightarrow k)$ & $x_{i,k}$ \\ \hline
Vertex & Deletion & $i$ & $\times$ & $c(i \rightarrow \epsilon)$ & $u_i$ \\ \hline
Vertex & Insertion & $\times$ & $k$ & $c(\epsilon \rightarrow k)$ & $v_k$ \\ \hline
Edge & Substitution & $ij$ & $kl$ & $c(ij \rightarrow kl)$ & $y_{ij,kl}$ \\ \hline
Edge & Deletion & $ij$ & $\times$ & $c(ij \rightarrow \epsilon)$ & $e_{ij}$ \\ \hline
Edge & Insertion & $\times$ & $kl$ & $c(\epsilon \rightarrow kl)$ & $f_{kl}$ \\ \hline
\end{tabular}
\caption{\label{tab:synthnotations} Summary of the notations for the GED framework}
\end{table}

\subsubsection{Objective function}
\label{subsubsection:objective}
The objective function \eqref{eq:objective} is the overall cost induced by applying an edit path $(\mathbf{x}, \mathbf{y}, \mathbf{u},
\mathbf{v}, \mathbf{e}, \mathbf{f})$ that transforms a graph $G_{1}1$ into a graph $G_{2}$, using the elementary costs of table \ref{tab:synthnotations}. In order to get
the graph edit distance between $G_{1}$ and $G_{2}$, this cost must be minimized.

\begin{equation}
\begin{aligned}
  \min_{\mathbf{x,y,u,v,e,f}} \Biggl( &\sum_{i \in V_1}\sum_{k \in V_2} c(i \rightarrow k) \cdot x_{i,k}\\& + \sum_{ij \in E_1}\sum_{kl \in E_2} c(ij \rightarrow kl) \cdot y_{ij,kl} \\
				      &+ \sum_{i \in V_1} c(i \rightarrow \epsilon) \cdot u_i + \sum_{k \in V_2} c(\epsilon \rightarrow k) \cdot v_k \\
				      &+ \sum_{ij \in E_1} c(ij \rightarrow \epsilon) \cdot e_{ij} + \sum_{kl \in E_2} c(\epsilon \rightarrow kl) \cdot f_{kl} \Biggr)
\end{aligned}
\label{eq:objective}
\end{equation}

\subsubsection{Constraints}
\label{subsubsection:constraints}
The constraints presented in this part are designed to guarantee that the admissible solutions of the BLP
are edit paths that transform $G_{1}$ in a graph which is isomorphic to $G_{2}$. An edit
path is considered as admissible if and only if the following conditions are respected:

\begin{enumerate}
 \item it provides a one-to-one mapping between a subset of the vertices of $G_{1}$ and a subset
	    of the vertices of $G_{2}$. The remaining vertices are either deleted or inserted,
 \item it provides a one-to-one mapping between a subset of the edges of $G_{1}$ and a subset of
	    the edges of $G_{2}$. The remaining edges are either deleted or inserted,
 \item the vertices matchings and the edges matchings are consistent, i.e. the graph
	    topology is respected.
\end{enumerate}
\renewcommand{\theenumi}{\alph{enumi}}

The following paragraphs describes the linear constraints used to integrate these conditions into the BLP.

\myparagraph{Vertices matching constraints}
The constraint \eqref{eq:sumx1} ensures that each vertex of $G_{1}$ is either matched to exactly one
vertex of $G_{2}$ or deleted from $G_{1}$, while the constraint \eqref{eq:sumx2} ensures that each
vertex of $G_{2}$ is either matched to exactly one vertex of $G_{1}$ or inserted in $G_{1}$:

\begin{equation} \label{eq:sumx1}
  u_i + \sum_{k \in V_2} x_{i,k}  = 1 \quad \forall i \in V_1
\end{equation}
\begin{equation} \label{eq:sumx2}
  v_k + \sum_{i \in V_1} x_{i,k}  = 1 \quad \forall k \in V_2
\end{equation}

\myparagraph{Edges matching constraints}
Similarly to the vertex matching constraints, the constraints \eqref{eq:sumy1} and \eqref{eq:sumy2} guarantee
a valid mapping between the edges:
\begin{equation} \label{eq:sumy1}
  e_{ij} + \sum_{kl \in E_2} y_{ij,kl}  = 1 \quad \forall ij \in E_1
\end{equation}
\begin{equation} \label{eq:sumy2}
  f_{kl} + \sum_{ij \in E_1} y_{ij,kl}  = 1 \quad \forall kl \in E_2
\end{equation}

\myparagraph{Topological constraints}
The respect of the graph topology in the matching of the vertices and of the edges is described in the following proposition~:

\begin{proposition}
 An edge $ij \in E_1$ can be matched to an edge $kl \in E_2$ if and only if the head vertices $i \in V_1$ and $k \in V_2$, on the one hand,
 and if the tail vertices $j \in V_1$ and $l \in V_2$, on the other hand, are respectively matched.
\end{proposition}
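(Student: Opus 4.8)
The plan is to read the proposition as a logical characterization of the admissible values of the edge-substitution variables $y_{ij,kl}$ in terms of the vertex-substitution variables $x_{i,k}$, and to prove it as a biconditional whose two directions I will afterwards turn into linear constraints. Throughout I would lean on the standing assumption that $G_1$ and $G_2$ are simple directed graphs, so that an edge is uniquely determined by its ordered pair of endpoints (a head and a tail), together with the fact that constraints \eqref{eq:sumx1} and \eqref{eq:sumx2} already force the family $(x_{i,k})$ to encode an \emph{injective} partial vertex mapping $\phi$ from a subset of $V_1$ onto a subset of $V_2$.

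First I would establish necessity (the ``only if'' direction). Suppose $ij$ is matched to $kl$, i.e. $y_{ij,kl}=1$. Since an admissible edit path must turn $G_1$ into a graph isomorphic to $G_2$, the image of the edge $ij$ under the induced mapping $\phi$ is an edge with head $\phi(i)$ and tail $\phi(j)$. For this image to be exactly the edge $kl$, whose head is $k$ and whose tail is $l$, I must have $\phi(i)=k$ and $\phi(j)=l$, that is $x_{i,k}=1$ and $x_{j,l}=1$. This is precisely the content I would later encode with the two inequalities $y_{ij,kl}\le x_{i,k}$ and $y_{ij,kl}\le x_{j,l}$, which over binary variables are equivalent to ``if the edge pair is selected, then both endpoint pairs are selected.''

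Next I would establish sufficiency (the ``if'' direction). Suppose $x_{i,k}=1$ and $x_{j,l}=1$, so $\phi$ sends $i\mapsto k$ and $j\mapsto l$. By simplicity, $ij$ is the \emph{unique} edge of $E_1$ with head $i$ and tail $j$, and $kl$ is the unique edge of $E_2$ with head $k$ and tail $l$; hence the only topologically consistent image of $ij$ is $kl$, so matching $ij$ to $kl$ is permitted. The edge-matching constraints \eqref{eq:sumy1} and \eqref{eq:sumy2} then confine each such edge either to this match or to deletion/insertion, and the triangle-type inequalities on the edge costs (with the dummy element $\epsilon$) ensure that at an optimum the match is actually realized rather than replaced by a deletion followed by an insertion.

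I expect the main obstacle to be the directed and simple bookkeeping in the sufficiency direction, specifically the claim that the induced map on edges is well defined and single valued: this hinges on there being no multiedges and on the injectivity of $\phi$ guaranteed by \eqref{eq:sumx2}. For multigraphs the edge is no longer determined by its endpoints, so ``must match'' weakens to ``can match,'' which matches the deliberately cautious wording of the statement. I would therefore be careful to phrase the equivalence so that the linear encoding $y_{ij,kl}\le x_{i,k}$ and $y_{ij,kl}\le x_{j,l}$ captures necessity exactly, while the realization of the match is delegated to the global edge-matching constraints and to cost optimality.
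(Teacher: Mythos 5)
Your proposal is correct and follows the paper's approach: the paper states this proposition without any proof, treating it as the definition of the topological-consistency requirement on admissible edit paths, and immediately encodes it by the linear constraints $y_{ij,kl}\le x_{i,k}$ and $y_{ij,kl}\le x_{j,l}$ (constraints \eqref{eq:topology_1} and \eqref{eq:topology_2}), which is exactly the reading and encoding you give, with your two-directional argument via the induced injective partial vertex mapping being a sound formalization of what the paper leaves implicit. One minor remark: since the proposition only asserts that the match is \emph{permitted} (``can be matched''), your appeal in the sufficiency direction to the cost triangle inequalities to argue that the match is actually realized at an optimum is unnecessary for the statement (and would not be airtight on its own, since deletion followed by insertion can tie the substitution cost and the realized matching depends on the global solution), so it is best omitted or flagged as an aside.
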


This quadratic constraint can be expressed linearly with the following constraints \eqref{eq:topology_1} and \eqref{eq:topology_2}:
\begin{itemize}
 \item 	$ij$ and $kl$ can be matched if and only if their head vertices are matched:
	\begin{equation}
	  y_{ij,kl} \leq x_{i,k} \quad \forall (ij, kl) \in E_1 \times E_2
	  \label{eq:topology_1}
	\end{equation}
 \item 	$ij$ and $kl$ can be matched if and only if their tail vertices are matched:
	\begin{equation}
	  y_{ij,kl} \leq x_{j,l} \quad \forall (ij, kl) \in E_1 \times E_2
	  \label{eq:topology_2}
	\end{equation}
\end{itemize}

\subsubsection{Straightforward formulation}
Putting equations \ref{eq:objective} to \ref{eq:topology_2} altogether leads to a first straightforward version of the BLP formulation:
\begin{subequations}
  \begin{center}(F1)\end{center}
  \begin{equation}
  \begin{aligned}
    \min_{\mathbf{x,y,u,v,e,f}} \Biggl( &\sum_{i \in V_1}\sum_{k \in V_2} c(i \rightarrow k) \cdot x_{i,k} \\
    & + \sum_{ij \in E_1}\sum_{kl \in E_2} c(ij \rightarrow kl) \cdot y_{ij,kl} \\
					&+ \sum_{i \in V_1} c(i \rightarrow \epsilon) \cdot u_i + \sum_{k \in V_2} c(\epsilon \rightarrow k) \cdot v_k \\
					&+ \sum_{ij \in E_1} c(ij \rightarrow \epsilon) \cdot e_{ij} + \sum_{kl \in E_2} c(\epsilon \rightarrow kl) \cdot f_{kl} \Biggr)
    \end{aligned}
    \label{f1:o}
  \end{equation}
  \begin{align}
    \text{subject to}\quad
    &u_i + \sum_{k \in V_2} x_{i,k}  = 1 \quad \forall i \in V_1\label{f1:c1}\\
    & v_k + \sum_{i \in V_1} x_{i,k} = 1 \quad \forall k \in V_2\label{f1:c2}\\
    &e_{ij} +\sum_{kl \in E_2} y_{ij,kl}  = 1 \quad \forall ij \in E_1\label{f1:c3}\\
    &f_{kl}+\sum_{ij \in E_1} y_{ij,kl}  = 1 \quad \forall kl \in E_2\label{f1:c4}\\
    &y_{ij,kl} \leq x_{i,k} \quad \forall (ij, kl) \in E_1 \times E_2\label{f1:c5}\\
    &y_{ij,kl} \leq x_{j,l} \quad \forall (ij, kl) \in E_1 \times E_2\label{f1:c6}\\
    \text{with}\quad
    &x_{i,k} \in \{0, 1\} \quad \forall (i, k) \in V_1 \times V_2\label{f1:d1}\\
    &y_{ij,kl} \in \{0, 1\} \quad \forall (ij, kl) \in E_1 \times E_2\label{f1:d2}\\
    &u_i \in \{0, 1\} \quad \forall i \in V_1\label{f1:d3}\\
    &v_k \in \{0, 1\} \quad \forall k \in V_2\label{f1:d4}\\
    &e_{ij} \in \{0, 1\} \quad \forall ij \in E_1\label{f1:d5}\\
    &f_{kl} \in \{0, 1\} \quad \forall kl \in E_2\label{f1:d6}
  \end{align}
\end{subequations}

The domain constraints, from \eqref{f1:d1} to \eqref{f1:d6}, are used to ensure that the solution is binary. 
Thus, the formulation (F1) has:
\begin{itemize}
  \item $|V_1| + |V_2| + |E_1| + |E_2| + |V_1|\cdot|V_2| + |E_1|\cdot|E_2|$ variables,
  \item $|V_1| + |V_2| + |E_1| + |E_2| + 2\cdot|E_1|\cdot|E_2|$ constraints (without the domain constraints).
\end{itemize}

\subsection{Reducing the size of the formulation}

In this subsection, we present a formulation that has been derived from the formulation (F1). We show that this formulation reduces the number of variables and the number of constraints. It will be shown in section \ref{sec:experiments} that this new formulation is more efficient.

\subsubsection{Reducing the number of variables}
\label{subsubsection:reduce_variables}
In the formulation (F1), the variables \textbf{u}, \textbf{v}, \textbf{e} and \textbf{f} help the reader to
understand how the objective function and the contraints were obtained, but they are unnecessary to solve
the GED problem.

We transform the vertex matching constraints \eqref{eq:sumx1} and \eqref{eq:sumx2} into inequality constraints,
without changing their role in the program. As a side effect, it removes the \textbf{u} and \textbf{v} variables
from the constraints:
\begin{equation} \label{eq:sumx1b}
  \sum_{k \in V_2} x_{i,k} \leq 1 \quad \forall i \in V_1
\end{equation}
\begin{equation} \label{eq:sumx2b}
  \sum_{i \in V_1} x_{i,k} \leq 1 \quad \forall k \in V_2
\end{equation}

We do the same for edge matching constraints \eqref{eq:sumy1} and \eqref{eq:sumy2}:
\begin{equation} \label{eq:sumy1b}
  \sum_{kl \in E_2} y_{ij,kl} \leq 1 \quad \forall ij \in E_1
\end{equation}
\begin{equation} \label{eq:sumy2b}
  \sum_{ij \in E_1} y_{ij,kl} \leq 1 \quad \forall kl \in E_2
\end{equation}

We then replace $\mathbf{u}, \mathbf{v}, \mathbf{e} \text{ and } \mathbf{f}$ variables in the objective function
\eqref{eq:objective} by their expressions, which can be easily deduced from equations \eqref{eq:sumx1},
\eqref{eq:sumx2}, \eqref{eq:sumy1} and \eqref{eq:sumy2}:

\begin{equation}
 \begin{aligned}
  &\sum_{i \in V_1}\sum_{k \in V_2} c(i \rightarrow k) \cdot x_{i,k} + \sum_{ij \in E_1}\sum_{kl \in E_2} c(ij \rightarrow kl) \cdot y_{ij,kl} \\& + \sum_{i \in V_1} c(i \rightarrow \epsilon) \cdot u_i + \sum_{k \in V_2} c(\epsilon \rightarrow k) \cdot v_k \\
  & + \sum_{ij \in E_1} c(ij \rightarrow \epsilon) \cdot e_{ij} + \sum_{kl \in E_2} c(\epsilon \rightarrow kl) \cdot f_{kl} \\
  &= \sum_{i \in V_1}\sum_{k \in V_2} (c(i \rightarrow k) - c(i \rightarrow \epsilon) - c(\epsilon \rightarrow k)) \cdot x_{i,k} \\
  &+ \sum_{ij \in E_1}\sum_{kl \in E_2} (c(ij \rightarrow kl) - c(ij \rightarrow \epsilon) - c(\epsilon \rightarrow kl) \cdot y_{ij,kl} + C \\
  &\biggl( \text{with } C = \sum_{i \in V_1} c(i \rightarrow \epsilon) + \sum_{k \in V_2} c(\epsilon \rightarrow k) \\
  & + \sum_{ij \in E_1} c(ij \rightarrow \epsilon) + \sum_{kl \in E_2} c(\epsilon \rightarrow kl)\biggr)
\end{aligned}
\label{eq:objective2}
\end{equation}

As all insertion and deletion variables can be \emph{a posteriori} deduced from the substitution variables, the constraints \eqref{eq:sumx1b} to \eqref{eq:sumy2b}
describe exactly the same set of edit paths than the constraints \eqref{eq:sumx1} to \eqref{eq:sumy2}. Equation \eqref{eq:objective2} shows that the GED can be obtained
without explicitly computing the variables $\mathbf{u}, \mathbf{v}, \mathbf{e} \text{ and } \mathbf{f}$.

\subsubsection{Reducing the number of constraints}
\label{subsubsection:reduce_constraints}
In the formulation (F1), the number of topological constraints, \eqref{eq:topology_1} and \eqref{eq:topology_2},
is $|E_1|\cdot|E_2|$. Therefore, in average, the number of constraints grows quadratically with the mean density
of the graphs. We show that it is possible to formulate the GED problem with potentially less constraints, leaving
the set of solutions unchanged. To this end, we propose to mathematically express Proposition 1 in another way.
We replace the constraints \eqref{eq:topology_1} and \eqref{eq:topology_2} by the following ones:

\begin{itemize}
 \item Given an edge $ij \in E_1$ and a vertex $k \in V_2$, there is at most one edge whose initial vertex is $k$
       that can be matched with $ij$:
       \begin{equation}
	  \sum_{kl \in E_2} y_{ij,kl} \leq x_{i,k} \quad \forall k \in V_2, \forall ij \in E_1\\
	  \label{eq:topology_3}
       \end{equation}
 \item Given an edge $ij \in E_1$ and a vertex $l \in V_2$, there is at most one edge whose terminal vertex is $l$
       that can be matched with $ij$:
       \begin{equation}
	  \sum_{kl \in E_2} y_{ij,kl} \leq x_{j,l} \quad \forall l \in V_2, \forall ij \in E_1\\
	  \label{eq:topology_4}
       \end{equation}
\end{itemize}

\begin{proposition}
Let $\Gamma_1$ be the set of edit paths (between $G_1$ and $G_2$) implied by the set of admissible solutions of (F1),
and let $\Gamma_2$ be the set of edit paths obtained similarly by replacing in (F1) the constraints \eqref{eq:topology_1} and \eqref{eq:topology_2}
by the constraints \eqref{eq:topology_3} and \eqref{eq:topology_4}. Then $\Gamma_1$ = $\Gamma_2$.
\end{proposition}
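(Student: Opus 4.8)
The plan is to show that the two programs have exactly the same set of binary feasible points, from which $\Gamma_1=\Gamma_2$ follows. Since the only difference between the two formulations is the pair of topological constraints, and since in both cases the insertion/deletion variables $\mathbf{u},\mathbf{v},\mathbf{e},\mathbf{f}$ are uniquely recovered from $(\mathbf{x},\mathbf{y})$ through the equalities \eqref{f1:c1}--\eqref{f1:c4}, it is enough to prove the following statement: a binary pair $(\mathbf{x},\mathbf{y})$ satisfying the matching constraints \eqref{f1:c1}--\eqref{f1:c4} fulfils \eqref{eq:topology_1}--\eqref{eq:topology_2} if and only if it fulfils \eqref{eq:topology_3}--\eqref{eq:topology_4}. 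Each admissible edit path corresponds to exactly one such pair, so proving this equivalence proves the equality of the two edit-path sets.

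First I would treat the direction \eqref{eq:topology_3}--\eqref{eq:topology_4} $\Rightarrow$ \eqref{eq:topology_1}--\eqref{eq:topology_2}, which requires only nonnegativity of the variables. Fix a pair $(ij,kl)\in E_1\times E_2$. The term $y_{ij,kl}$ is one of the nonnegative summands appearing in the left-hand side of \eqref{eq:topology_3} for this $ij$ and for the head $k$; hence $y_{ij,kl}\le \sum_{kl\in E_2} y_{ij,kl}\le x_{i,k}$, which is exactly \eqref{eq:topology_1}. Running the same argument with \eqref{eq:topology_4} and the tail $l$ recovers \eqref{eq:topology_2}.

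The converse direction \eqref{eq:topology_1}--\eqref{eq:topology_2} $\Rightarrow$ \eqref{eq:topology_3}--\eqref{eq:topology_4} is where binarity becomes essential, and this is the main obstacle. Fix $ij\in E_1$ and $k\in V_2$. The edge matching constraint \eqref{f1:c3} gives $\sum_{kl\in E_2} y_{ij,kl}\le 1$, so the partial sum restricted to the edges of $G_2$ with head $k$ is either $0$ or $1$. If it is $0$, constraint \eqref{eq:topology_3} holds trivially. If it is $1$, then a single edge $kl^\ast$ carries $y_{ij,kl^\ast}=1$, and \eqref{eq:topology_1} forces $x_{i,k}\ge 1$, hence $x_{i,k}=1$ and \eqref{eq:topology_3} holds again. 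The symmetric argument on the tails, using the same bound $\sum_{kl\in E_2} y_{ij,kl}\le 1$, yields \eqref{eq:topology_4}. This establishes the equivalence of the feasible sets and therefore $\Gamma_1=\Gamma_2$.

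I would close by stressing why integrality cannot be dropped, since this is precisely the subtle point of the argument. Over the real box $[0,1]$ the converse implication fails: setting $x_{i,k}=\tfrac12$ together with two edges $kl_1,kl_2\in E_2$ carrying $y_{ij,kl_1}=y_{ij,kl_2}=\tfrac12$ satisfies \eqref{eq:topology_1}--\eqref{eq:topology_2} but violates \eqref{eq:topology_3}, whose left-hand side equals $1>\tfrac12$. Thus the heart of the proof is the observation that, among binary solutions, constraint \eqref{f1:c3} forces at most one summand of each aggregated inequality to be nonzero, which is exactly what makes the compact constraints \eqref{eq:topology_3}--\eqref{eq:topology_4} equivalent to the $|E_1|\cdot|E_2|$ pairwise ones.
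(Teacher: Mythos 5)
Your proof is correct and follows essentially the same route as the paper's: the easy direction uses only nonnegativity ($y_{ij,kl} \leq \sum_{kl \in E_2} y_{ij,kl} \leq x_{i,k}$), and the converse uses the edge matching constraint \eqref{eq:sumy1} to argue that at most one summand with head $k$ can be nonzero, so the aggregated sum collapses to a single term bounded by $x_{i,k}$ via \eqref{eq:topology_1} --- your case split (partial sum equal to $0$ or $1$) is just a rephrasing of the paper's $\max = \sum$ step. Your closing fractional counterexample is a worthwhile addition the paper does not make (it shows the equivalence genuinely fails for the relaxations F1LP and F2LP, which is consistent with the paper's empirical observation that F2LP is the tighter lower bound), though to be complete it should also set $x_{j,l_1} = x_{j,l_2} = \tfrac12$ so that \eqref{eq:topology_2} is satisfied as claimed.
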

\begin{proof}\
 \item $\Gamma_2 \subseteq \Gamma_1$:  Let $ij \in E_1$ and $kl \in E_2$, and let us suppose that \eqref{eq:topology_3} is satisfied.
 \begin{align*}
		   &x_{i,k} \geq \sum_{kl' \in E_2} y_{ij,kl'} \\
 \Rightarrow \quad &x_{i,k} \geq y_{ij,kl} + \sum_{kl' \in E_2, kl' \neq kl}  y_{ij,kl'} \\
 \Rightarrow \quad &x_{i,k} \geq y_{ij,kl}
 \end{align*}
 
Thus, the constraint \eqref{eq:topology_1} is satisfied for all $ij \in E_1$ and for all $kl \in E_2$.
Similarly, we deduce that \eqref{eq:topology_2} is satisfied using the constraint \eqref{eq:topology_4}.
 
\item $\Gamma_1 \subseteq \Gamma_2$:  Let $ij \in E_1$ and $k \in V_2$.
 
If $\{l \in V_2: kl \in E_2\} = \emptyset$, then $\sum_{kl \in E_2} y_{ij,kl} = 0$
and \eqref{eq:topology_3} is satisfied. Otherwise, using the constraint \eqref{eq:topology_1}, we have:
\[\forall kl \in E_2, x_{i,k} \geq y_{ij,kl} \; \Rightarrow \; x_{i,k} \geq \max_{kl \in E_2} (y_{ij,kl})\]
  
Constraint \eqref{eq:sumy1} ensures that $\mathrm{card} \{l' \in V_2: y_{ij,kl} = 1\} \leq 1$, thus:
		   
$$\max_{kl' \in E_2} (y_{ij,kl'}) = \sum_{kl' \in E_2} y_{ij,kl'} \Rightarrow \quad x_{i,k} \geq \sum_{kl' \in E_2} y_{ij,kl'} $$
   
\noindent and \eqref{eq:topology_3} is still satisfied.
  
Thus, the constraint \eqref{eq:topology_3} is satisfied for all $ij \in E_1$ and for all $k \in V_2$.
Similarly, we prove that \eqref{eq:topology_4} is satisfied using \eqref{eq:topology_2} and \eqref{eq:sumy2}.
\end{proof}

The number of topological constraints, \eqref{eq:topology_3} and \eqref{eq:topology_4}, is now $|E_1|\cdot|V_2|$.
In average, it grows linearly with the density of the graphs. This leads to substantially shorter formulations of
the GED as the number of graph vertices and edges grows.

Please note that another substitution of constraints \eqref{eq:topology_1} and \eqref{eq:topology_2} is possible,
namely with the two following constraints:

\begin{equation}
  \sum_{ij \in E_1} y_{ij,kl} \leq x_{i,k} \quad \forall i \in V_1, \forall kl \in E_2\label{eq:topology_5}\\
\end{equation}

\begin{equation}
  \sum_{ij \in E_1} y_{ij,kl} \leq x_{j,l} \quad \forall j \in V_1, \forall kl \in E_2\label{eq:topology_6}\\
\end{equation}
This leads to a strictly equivalent formulation in terms of admissible solutions, however it changes the number
of topological constraints, \eqref{eq:topology_5} and \eqref{eq:topology_6}, that would be $|E_2|\cdot|V_1|$.

In addition, we prove that the constraints \eqref{eq:sumy1b} and \eqref{eq:sumy2b} are not necessary to the
formulation of the GED problem, since they are implied by other constraints of the BLP.

\begin{proposition}
Constraint \eqref{eq:sumy1b} is implied by \eqref{eq:sumx1b} and \eqref{eq:topology_3}
\end{proposition}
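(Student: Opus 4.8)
The plan is to fix an arbitrary edge $ij \in E_1$ and to derive \eqref{eq:sumy1b} for that edge by summing the family of inequalities \eqref{eq:topology_3} over all vertices $k \in V_2$. For each fixed $k$, constraint \eqref{eq:topology_3} reads $\sum_{kl \in E_2} y_{ij,kl} \leq x_{i,k}$, where the left-hand sum ranges over the edges of $E_2$ whose head vertex is exactly $k$. First I would add these $|V_2|$ inequalities together, which gives
\[ \sum_{k \in V_2} \sum_{kl \in E_2} y_{ij,kl} \;\leq\; \sum_{k \in V_2} x_{i,k}. \]

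The key step is to observe that the double sum on the left collapses to $\sum_{kl \in E_2} y_{ij,kl}$: since $G_2$ is a (simple) directed graph, every edge $kl \in E_2$ has a unique head vertex $k$, so the sets of edges indexed by their head vertex form a partition of $E_2$ and each variable $y_{ij,kl}$ is counted exactly once. I would then bound the right-hand side using the vertex constraint \eqref{eq:sumx1b}, namely $\sum_{k \in V_2} x_{i,k} \leq 1$, and conclude $\sum_{kl \in E_2} y_{ij,kl} \leq 1$. As $ij$ was arbitrary, this is precisely \eqref{eq:sumy1b}.

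The only subtlety, and the step I would state most carefully, is the regrouping of the left-hand side, which is what distinguishes this argument from a mere term-by-term comparison: it hinges on the fact that the inner summation in \eqref{eq:topology_3} is over edges with a fixed head $k$ (the overloaded $kl$ notation with $k$ held fixed), so that summing over $k$ neither omits nor double-counts any edge of $E_2$. This is guaranteed because each directed edge determines its head unambiguously. No appeal to the remaining constraints of the program is required, so the argument is short; the remaining work is purely the bookkeeping needed to make the $kl$ notation unambiguous.
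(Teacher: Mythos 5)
Your proof is correct and follows essentially the same route as the paper's: summing the constraints \eqref{eq:topology_3} over all $k \in V_2$, collapsing the resulting double sum on the left to $\sum_{kl \in E_2} y_{ij,kl}$, and bounding the right-hand side by $1$ via \eqref{eq:sumx1b}. Your explicit justification of the collapse (each edge of $E_2$ has a unique head vertex, so the head-indexed edge sets partition $E_2$) is in fact more careful than the paper's terse ``we reduce the left term''.
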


\begin{proof}\
Let $ij \in E_1$. Given \eqref{eq:topology_3}, we have:
\[ \sum_{kl \in E_2} y_{ij,kl} \leq x_{i,k} \quad \forall k \in V_2 \]
\[ \Rightarrow \sum_{k \in V_2} \sum_{kl \in E_2} y_{ij,kl} \leq \sum_{k \in V_2} x_{i,k} \]

We reduce the left term of this inequation and we use \eqref{eq:sumx1b}:
\[ \sum_{kl \in E_2} y_{ij,kl} \leq \sum_{k \in V_2} x_{i,k} \leq 1 \]

Thus, \eqref{eq:sumy1b} is implied by \eqref{eq:sumx1b} and \eqref{eq:topology_3}.
Similarly, we prove that \eqref{eq:sumy2b} is implied by \eqref{eq:sumx2b} and \eqref{eq:topology_4}.
\end{proof}

\subsubsection{Simplified formulation}
The results obtained in \ref{subsubsection:reduce_variables} and \ref{subsubsection:reduce_constraints} show that
the GED problem can also be solved by using \eqref{eq:objective2} as the objective function and \eqref{eq:sumx1b},
\eqref{eq:sumx2b}, \eqref{eq:topology_3} and \eqref{eq:topology_4} as the constraints of the BLP. We finally come up with a simplified formulation of the GED problem:

\begin{subequations}
  \begin{center}(F2)\end{center}
  \begin{equation}
      \begin{aligned}
        \min_{\mathbf{x,y}} \Biggl( \sum_{i \in V_1}\sum_{k \in V_2}
        \Bigl(c(i \rightarrow k) - c(i \rightarrow \epsilon)
        - c(\epsilon \rightarrow k)\Bigr) \cdot x_{i,k} \\
        + \sum_{ij \in E_1}\sum_{kl \in E_2} \Bigl(c(ij \rightarrow
        kl) - c(ij \rightarrow \epsilon)
        - c(\epsilon \rightarrow kl)\Bigr) \cdot y_{ij,kl} \\
        + C \Biggr)
      \end{aligned}
    \label{f2:o}
  \end{equation}
  \begin{align}
    \text{subject to}\quad
    &\sum_{k \in V_2} x_{i,k} \leq 1 \quad \forall i \in V_1\label{f2:c1}\\
    &\sum_{i \in V_1} x_{i,k} \leq 1 \quad \forall k \in V_2\label{f2:c2}\\
    &\sum_{kl \in E_2} y_{ij,kl} \leq x_{i,k} \quad \forall k \in V_2, \forall ij \in E_1\label{f2:c3}\\
    &\sum_{kl \in E_2} y_{ij,kl} \leq x_{j,l} \quad \forall l \in V_2, \forall ij \in E_1\label{f2:c4}\\
    \text{with}\quad
    &x_{i,k} \in \{0, 1\} \quad \forall (i, k) \in V_1 \times V_2\label{f2:d1}\\
    &y_{ij,kl} \in \{0, 1\} \quad \forall (ij, kl) \in E_1 \times E_2\label{f2:d2}
  \end{align}
\end{subequations}

The formulation (F2) has:
\begin{itemize}
  \item $|V_1|\cdot|V_2| + |E_1|\cdot|E_2|$ variables,
  \item $|V_1| + |V_2| + 2 |V_2|\cdot|E_1|$ constraints (without the domain constraints).
\end{itemize}

Thus, it uses less variables than (F1), and depending on the density of the graphs, it potentially
uses less constraints to solve the same problem.

\subsection{Solving the programs}
Solving an ILP is NP-hard \cite{garey79:_comput_and_intrac}, thus exploring the entire solution tree is not an option since it would take an exponential time.
However, dedicated solvers have been developed to reduce the number of explored solutions and the solving time, by using a branch-and-cut algorithm along with some heuristics \cite{Wolsey1998}.

Once equations \eqref{ilp:o} to \eqref{ilp:c2} are correctly formulated, the second step consists in implementing this model using a mathematical solver. Given an instance of the problem, the solver explores the tree of solutions with the branch-and-bound algorithm, and finds the best feasible solution, in terms of the objective function optimization.

\subsection{Lower bounding the GED with continuous relaxation}

The common resolution method of an ILP consists in using a branch-and-bound algorithm.
The continuous relaxation of an ILP, i.e. a linear program (LP) where the constraints remain unmodified but where the variables
are now continuous, is a lower bound of the minimization problem that can be solved in polynomial time $\mathcal{O}(n^{3.5})$ with the interior point
method \cite{saigal95:_linear_progr}. This lower bound helps the ILP solving since it allows to prune the exploration of the solution tree.

However, the continuous relaxation can also be used to approximate the optimal objective value in polynomial time. We call F1LP (resp. F2LP) the continuous relaxation of F1 (resp. F2). To this end, we only substitute the discrete space $\{0, 1\}$ by the continuous space $[0, 1]$ in domain constraints (12i) to (12n).


\section{Experiments}
\label{sec:experiments}
As stated in the introduction, one of the contributions of this article is to provide to the reader a robust experimental study through a comparison of eight methods.
In this section, we first describe the methods that have been studied. Then, the datasets and the protocol used to compare the reference methods and our proposals are described. Finally, the results are presented and discussed.

\subsection{Studied methods}
\label{subsec:studiedmethods}
We compare the four approaches proposed previously with four other graph edit distance algorithms from the literature. From the related work, we chose one exact method and three approximate methods. On the exact method side, A$^*$ algorithm applied to GED problem \cite{DBLP:conf/mlg/RiesenFB07} is a foundation work. In our tests, the heuristic is computed thanks to the approximation based on bipartite graph matching. It is the most well-known exact method and it is often used to evaluate the accuracy of approximate methods.
On the approximate method side, we can distinguish three families of methods, tree-based methods, assignment-based methods and set-based methods. For the tree-based methods, a truncated version of A$^*$ called beam search was chosen. This method is known to be one of the most accurate heuristic from the literature \cite{DBLP:conf/sspr/NeuhausRB06}. Among the assignment-based methods, we selected the bipartite graph matching described in \cite{DBLP:journals/ivc/RiesenB09}. In \cite{DBLP:journals/ivc/RiesenB09}, authors demonstrated that this upper bound is a good compromise between speed and accuracy. Finally, we picked a very recent set-based method. In 2014, A. Fischer et al \cite{Fischer2015331} proposed an approach based on the Hausdorff matching. This method is a lower bound of the GED problem.
All these methods cover a large range of GED solvers. In table \ref{optimalnotation}, for each method, acronym, type of method (exact or not) as well as a short synthesis are presented.
We could not assess our methods against all the state of the art. Among the missing methods, we did not compare experimentally our proposals against the binary linear programs proposed by Justice and Hero \cite{Justice:2006:BLP:1155317.1155424}. Despite our best efforts, we could not find the source code of the method or binary files and neither the datasets used in their experiments. 




\begin{center}
 
\begin{table}[!h]
\small

\begin{tabular}{|p{2.5cm}|p{5.5cm}|}
\hline 
\textbf{Acronym / Type} & \textbf{Description of the method} \\ 
\hline 
A* (\cite{DBLP:conf/mlg/RiesenFB07} ) \newline Exact& A* algorithm using a bipartite heuristic.\\ 
\hline 
F1  (\emph{this paper})  \newline Exact & Our first binary linear programming formulation.\\ 
\hline
F2 ( \emph{this paper})  \newline Exact & Our second BLP formulation, derived from (F1).\\ 
\hline
\hline
BP  \cite{DBLP:journals/ivc/RiesenB09} \newline Upper bound & Bipartite graph matching using Munkres algorithm. \\ 
\hline 
BS-$q$ \cite{DBLP:conf/sspr/NeuhausRB06} \newline Upper bound  & A* algorithm with beam search approach and using a bipartite heuristic. \\
\hline 
H  \cite{fischer2013fast} \newline Lower bound  & Modified Hausdorff distance applied to graphs.  \\ 
\hline 
F1LP \emph{this paper} \newline Lower bound  & Linear programming approach, continuous relaxation of (F1).  \\
\hline
F2LP  \emph{this paper} \newline Lower bound  & Linear programming approach, continuous relaxation of (F2).  \\
\hline
\end{tabular} 
\caption{Notations corresponding to each optimal or suboptimal method}
\label{optimalnotation}
\end{table}

\end{center}

\subsection{Datasets}
\label{subsec:datasets}
Graph edit distance algorithms are applied to three different real world graph datasets (GREC, Protein, Mutagenicity) and to one synthetic dataset (ILPISO).
Real world datasets are described in \cite{iamdb} while the synthetic dataset is depicted in \cite{LeBodic20124214}. 
All datasets are publicly available on IAPR Technical commitee $\#$15 website\footnote{\url{https://iapr-tc15.greyc.fr/links.html\#Benchmarking\%20and\%20data\%20sets}}.
From these datasets, we have built subsets where all graphs have the same number of vertices in order to evaluate the algorithms behaviours when complexity grows. The underlying assumption is that the problem becomes more complex as the graphs hold more vertices.
Each dataset is described in three steps. We first present the application field and the graph construction. Secondly, the cost function used for the considered dataset is presented. Finally, the interest of the dataset is discussed.
A synthesis concerning those data are given in table \ref{tab:datasets}.
For each dataset, the corresponding subset and the code of the cost function are available at \url{https://sites.google.com/site/blpged/}.

\subsubsection{GREC dataset (GREC)}
The GREC dataset consists of graphs representing symbols from architectural and electronic drawings. The images occur at five different distortion levels. The result is thinned to obtain lines of one pixel width. Finally, graphs are extracted from the resulting denoised images by tracing the lines from end to end and detecting intersections as well as corners.
Ending points, corners, intersections and circles are represented by vertices and labeled with a two-dimensional attribute giving their position. The vertices are connected by undirected edges which are labeled as line or arc. An additional attribute specifies the angle with respect to the horizontal direction or the diameter in case of arcs.
From the original GREC dataset \cite{GRECdb}, 22 classes are considered. From IAM GREC dataset, subsets were built and their corresponding characteristics is provided in table \ref{tab:subdatasetsPROT}.

\paragraph*{Cost function}
Additionally to (x, y) coordinates, the graph vertices are labeled with a type (ending point, corner, intersection, circle). The same goes with the edges where two types (line, arc) are employed. The Euclidean cost model is adapted accordingly. That is, for vertex substitutions the type of the involved vertices is compared first.
For identically typed vertices, the Euclidean distance is used as vertex substitution cost. In case of non-identical types on the vertices, the substitution cost is set to $2 \cdot \tau_{vertex}$, which reflects the intuition that vertices with different type label cannot be substituted but have to be deleted and inserted, respectively.
For edge substitutions, we measure the dissimilarity of two types with a Dirac function returning 0 if the two types are equal, and $2 \cdot \tau_{edge}$ otherwise. 
Meta-parameters $\tau_{vertex}$ and $\tau_{edge}$ are explained in section parameter settings \ref{sec:paramsetting}.
Elementary operation costs are set up from \cite{Riesen:2010:GCC:1855255} and they are reported in table \ref{tab:metaparameters}.

\paragraph*{Dataset interest}
GREC dataset is composed of undirected graphs of rather small size (i.e. up to 20 vertices in our experiments). In addition, continuous attributes on vertices and edges play an important role in the matching procedure. Such graphs are representative of pattern recognition problems where graphs are involved in a classification stage.

\subsubsection{Mutagenicity dataset (MUTA)}
Mutagenicity is one of the numerous adverse properties of a compound that
hampers its potential to become a marketable drug \cite{doi:10.1021/jm040835a}.
This dataset consists of two classes (mutagen, nonmutagen), which represent molecules.
The molecules are converted into graphs in a straightforward manner by representing atoms as vertices and
the covalent bonds as edges. Vertices are labeled with their chemical symbol and edges by the valence of the linkage.
From this dataset, subsets were generated with the idea to build subfolds where all graphs have the same number of vertices spaced exactly by\textit{ 10 : 10 : 70} vertices. Every subfold holds exactly 10 graphs.

%
%

\paragraph*{Cost function}
Edge substitutions are free of cost. For vertex substitutions, we measure the dissimilarity of two chemical symbols with a Dirac function returning 0 if the two symbols are equal, and 2. $\tau_{vertex}$ otherwise.

\paragraph*{Dataset interest}
This dataset is representative of exact matching problems in the way that a significant part of the topology together with the corresponding vertex and edge labels in $G_1$ and $G_2$ have to be identical. In addition, this set of graphs gathers large instances with up to 70 vertices.
Elementary operation costs are set up from \cite{Riesen:2010:GCC:1855255} and they are reported in table \ref{tab:metaparameters}.

\subsubsection{Protein dataset (PROT)}
The protein dataset contains graphs representing proteins originally used in \cite{4202588}. The graphs are constructed from the Protein Data Bank and labeled with their corresponding enzyme class labels from the BRENDA enzyme dataset \cite{Schomburg:2004nu}.
The protein graphs are split into six classes (EC 1, EC 2, EC 3, EC 4, EC 5, EC 6), which represent proteins out of the six enzyme commission top level hierarchy (EC classes).
The proteins are converted into graphs by representing the secondary structure elements of a protein with vertices and edges of an attributed graph. Vertices are labeled with their type (helix, sheet, or loop) and their amino acid sequence (e.g. TFKEVVRLT). Every vertex is connected with an edge to its three nearest neighbors in space. Edges are labeled with their type and the distance they represent in angstroms. A summary of all subsets and their corresponding characteristics is provided in table \ref{tab:subdatasetsPROT}.

\begin{table}[!h]
\begin{center}
\centering
\begin{tabular}{|p{1cm}|c|c|c||c|c|c||c|c|c|c|}
\hline  
& \multicolumn{3}{c|}{PROT} & \multicolumn{3}{c|}{ILPISO}  &\multicolumn{4}{c|}{GREC}        \\
\hline
$\#$vertices & 20 & 30 & 40& 10 & 25 & 50& 5 & 10 &15  &20   \\ 
\hline 
$\#$graphs & 15 & 13 & 22& 12 & 12 & 12& 41 & 74 & 34 & 39  \\ 
\hline 
\end{tabular}
\caption{Subsets decomposition of PROT, ILPISO and GREC datasets} 
\label{tab:subdatasetsPROT}
\end{center}
\end{table}

\paragraph*{Cost function}
For the protein graphs, a cost model based on the amino acid sequences is used. For vertex substitutions, the type of the involved vertices is compared first. If two types are identical, the amino acid sequences of the vertices to be substituted are compared by means of string edit distance. Similarly to graph edit distance, string edit distance is defined as the cost of the minimal edit path between a source string and a target string. More formally, given an alphabet L and two strings s1, s2 defined on L (s1, s2 $\in$ $L^*$), we allow substitutions, insertions, and deletions of symbols and define the corresponding cost as follows :$$c(u \rightarrow v) = c(u \rightarrow \epsilon) = c(\epsilon \rightarrow v)=1 \text{ for } u,v \in L, u \ne v$$

Hence, vertex substitution cost is defined as the minimum cost sequence of edit operations that has to be applied to the amino acid sequence of the source vertex in order to transform it into the amino acid sequence of the target vertex.
If two vertex types (helix, sheet, or loop) are not identical then the substitution is equivalent to a vertex deletion. For edge substitutions, we measure the dissimilarity with a Dirac function returning 0 if the two edge types are equal, and $2 \cdot \tau_{edge}$ otherwise.
Elementary operation costs are set up from \cite{Riesen:2010:GCC:1855255} and they are reported in table \ref{tab:metaparameters}.

\paragraph*{Dataset interest}
The stringent constraints imposed by exact vertex matching is relaxed thanks to the string edit distance. So the matching process can be tolerant and accommodate with differences.

\subsubsection{ILPISO dataset (ILPISO)}
Four synthetic datasets are provided. Each of them is composed of several triplets (pattern graph, target graph and groundtruth). The graphs have been randomly generated thanks to the Erdos-Renyi model \cite{Erdos60onthe} with or without the constraint of producing connected graphs. For each option, one version of the dataset has an exact mapping (equal labels between matched vertices/edges) whereas an other version includes noise on label values. The groundtruth information gives the one-to-one vertex mapping involving the minimal cost assignment. Each vertex and edge is labelled with a single continuous value in $[-100,+100]$. Edge and vertex attributes follow a uniform law $\mathcal{U}(-100,100)$.
A summary of all subsets and their corresponding characteristics is provided in table \ref{tab:subdatasetsPROT}.

\paragraph*{Cost function}
For vertex substitutions, we measure the dissimilarity of two vertices with an absolute difference. For vertex deletion and insertion, a fixed cost is chosen which is equal to $\dfrac{2}{3}100 \approx 66,6$.
Elementary operation costs are reported in table \ref{tab:metaparameters}.

\paragraph*{Dataset interest}
This dataset stands apart from the others in the sense that this dataset hold directed graphs. The aim is to illustrate the flexibility of our proposal that can handle different type of graphs.

\begin{table}
\centering
\begin{tabular}{|c||p{1.2cm}|p{1.2cm}|p{1.2cm}|p{1.2cm}|}
\hline 
 & MUTA & GREC & PROT &ILPISO \\ 
\hline 
\hline 
Size &4337  & 1100 & 600 & 36\\ 
\hline 
Vertex labels &Chemical symbol & x, y coordinates & Type and aa-sequence &scalar value\\ 
\hline 
Edge labels & Valence & Line type & Type and distance &scalar value\\ 
\hline 
$\overline{vertices}$ & 30.3 & 11.5 &32.6& 28.3\\ 
\hline 
$\overline{edges}$ &30.8  & 12.2 & 62.1 &54.3\\ 
\hline 
Graph type & undirected & undirected & undirected &directed\\ 
\hline 
\end{tabular}
\caption{Summary of the graph datasets characteristics} 
\label{tab:datasets}
\end{table}

\begin{table}
\centering
\begin{tabular}{|c||c|c|c|p{1.5cm}|p{1.5cm}|}
\hline 

& $\tau_{vertex}$ &$\tau_{edge}$ & $\alpha$ & Vertex substitution function & Edge substitution function \\ 
\hline 
\hline 
GREC & 90 & 15 & 0.5 & Extended euclidean distance & Dirac function \\ 
\hline 
PROT & 11 & 1 & 0.75 & Extended string edit distance & Dirac function \\ 
\hline 
MUTA & 11 & 1.1 & 0.25 & Dirac function & Dirac function \\ 
\hline 
ILPISO & 66.6 & 66.6 & 0.5 & L1 norm & L1 norm \\ 
\hline 
\end{tabular}  
\caption{Cost function meta parameters for the four datasets}
\label{tab:metaparameters}
\end{table}

\subsection{Protocol}
In this section, the experimental protocol is detailed. We explain how the experiments were performed and the reasons why we led these tests.

Our experiments were carried out in a context of graph comparisons. Let $\mathcal{S}$ be a graph dataset consisting of $m$ graphs, $\mathcal{S} = \{G_1, G_2, ..., G_m\}$. Let $\mathcal{P} = \mathcal{P}_e \cup \mathcal{P}_a$ be the set of all graph edit distance methods listed in \ref{subsec:studiedmethods}, with $\mathcal{P}_e = \{\text{A*, F1, F2}\}$ the set of exact methods and $\mathcal{P}_a = \{\text{BP, BS-10, H, F1LP, F2LP}\}$ the set of approximate methods (see table \ref{optimalnotation} for notations).
Given a method $p \in \mathcal{P}$, we computed the square distance matrix $M^p \in \mathcal{M}_{m\times m}(\mathbb{R}^+)$, that holds every pairwise comparison
$M^p_{i,j}=d_p(G_i,G_j)$, where the distance $d_p(G_i,G_j)$ is the value returned by the method $p$ on the graph pair $(G_i,G_j)$ within
a certain time limit, and using the cost metaparameters defined in table \ref{tab:metaparameters}. For instance $M^\text{F1}$ and $M^\text{BP}$ denote distance matrices computed with F1 and BP methods respectively.

Due to the large number of matchings considered and the exponential complexity of the algorithms tested, we allowed a maximum of \textbf{300 seconds}
for any distance computation. 
 When time limit is over, the best solution found so far is outputted by the given method. 
This time constraint is large enough to let the methods search deeply into the solution space and to ensure that many nodes will be explored. The key idea is to reach the optimality whenever it is possible or at least to get as close as possible to the \textit{Graal}, the optimal solution.
This constraint on the system is well admitted in the operational research field \cite{DBLP:journals/scheduling/BaptisteCGT10, Justice:2006:BLP:1155317.1155424}.




Based on this context of pairwise graph comparison, a set of metrics is defined to measure the accuracy and the speed of our four proposed methods and four standard methods.
   

In the next subsections, performance evaluation metrics as well as the experimental settings are detailed.

\subsubsection{Accuracy metrics}
To illustrate the error committed by approximated methods over exact methods, we measure an index called deviation which is defined by equation \ref{equ:dev}.
\begin{equation}
\label{equ:dev}
\textit{deviation}(i,j)^p = \dfrac{\vert M^p_{i,j}-R_{i,j} \vert}{R_{i,j}}, 
 \forall (i,j) \in \llbracket 1,m \rrbracket ^2, 
 \forall p \in \mathcal{P}
\end{equation}
Where $R$ is defined in equation \ref{equ:ref}.

\begin{equation}
\label{equ:ref}
\textit{R}_{i,j} = \min_{p \in \mathcal{P} \setminus \{ F1LP, F2LP, H \} }\{ M^p_{i,j}\}, \; \forall (i,j) \in \llbracket 1,m \rrbracket ^2
\end{equation}
For each comparison, the reference matrix holds the optimal graph edit distance whenever it is possible to compute it. The optimality may not be reached due to time restriction. When no optimal solutions were available, the lowest graph edit distance found among all the methods is chosen to be the reference value. The lower bounds (H, F1LP and F2LP) are removed from the formula \ref{equ:ref} since they do not represent feasible solutions and they cannot represent real sequences of edit operations. 
For a given method, the deviation can express the error made by a suboptimal solution in percentage of the best solution.

For each subset, the mean deviation is derived as follows in equation \ref{equ:meandev} :
\begin{equation}
\label{equ:meandev}
\overline{\textit{deviation}^p}=\dfrac{1}{m \times m} \sum_{i=1}^{m} \sum_{j=1}^{m} \textit{deviation}(i,j)^p
\end{equation}

To obtain comparable results between datasets, mean deviations are normalized between $[0,1]$ as follows in equation \ref{equ:devscore} :
\begin{equation}
\label{equ:devscore}
\overline{\textit{deviation score}^p}=\dfrac{1}{\#subsets} \sum_{i=1}^{\#subsets} \dfrac{\overline{\textit{deviation}^p_i}}{maxdev_i}
\end{equation}

\begin{equation*}
\textit{maxdev}_{i} = \max \overline{\textit{deviation}_i^p}\; \forall p \in \mathcal{P} 
\end{equation*}

Deviation score is a type of measurement used to compare performance over subsets.

\subsubsection{Speed metrics}
To evaluate the convergence of algorithms, the mean time for each dataset is derived as follows in equation \ref{equ:meantime} :
\begin{equation}
\label{equ:meantime}
\overline{\textit{time}^p}=\dfrac{1}{m \times m} \sum_{i=1}^{m} \sum_{j=1}^{m} time(p, G_i, G_j) \;\textrm{and} \; (i,j) \in \llbracket 1, m \rrbracket ^ 2
\end{equation}

Finally, we introduce a last metric called speed score. To compare speed performance over datasets, the running time is normalized between $[0,1]$ as follows in equation \ref{equ:speedscore} :
\begin{equation}
\label{equ:speedscore}
\overline{\textit{speed score}^p}=\dfrac{1}{\#subsets} \sum_{i=1}^{\#subsets} \dfrac{\overline{\textit{time}^p_i}}{maxtime_i}
\end{equation}
\begin{equation*}
\textit{maxtime}_{i} = \max \overline{\textit{time}_i^p}\; \forall p \in \mathcal{P} 
\end{equation*}
These evaluations were run on datasets GREC, PROT, MUTA and ILPISO. In order to show the impact of the graph size on the problem complexity, we performed
our experiment on subsets where all graphs have the same number of vertices.

\subsubsection{Experimental settings}
\label{sec:paramsetting}
For the understanding of these tests, we first recall notations that will make the reading much simpler. Graph edit distance holds meta parameters which are domain-depend costs. We borrow notations from Kaspar Riesen thesis report \cite{Riesen:2010:GCC:1855255}. $\tau_{node}$ corresponds to the cost of a node deletion or insertion, $\tau_{edge}$ corresponds to the cost of an edge deletion or insertion, $\alpha \in [0,1]$ corresponds to the weighting parameter that controls whether the edit operation cost on the nodes or on the edges is more important. Elementary operation costs are reported in table \ref{tab:metaparameters}.

In this practical work, the $BP$ was provided by the Institute of Computer Science and Applied Mathematics of Bern in Switzerland\footnote{\url{http://www.iam.unibe.ch/fki/}}, while other methods were re-implemented by
us from the literature. All methods are implemented in JAVA 1.7 except for the F1 and F2 models that are implemented in $C\#$ using CPLEX Concert Technology. CPLEX 12.6 was chosen since it is known to be one of the best mathematical programming solvers. All the methods were run on a 2.6 GHz quad-core computer with 8 GB RAM. For the sake of comparison, none of the methods were parallelized and CPLEX was set up in a deterministic manner.

\subsection{Results}
In this section, we present the results obtained from the experiments.

In figure \ref{meandeviation300s}, the mean deviations of exact methods and approximate methods are presented. Note that A* method was only computed on GREC dataset due its inherent and intractable time complexity. A*'s experiments could not be conducted for graphs larger than 15 vertices with a memory constraint of 1 GB. 
From figure \ref{meandeviation300s} several conclusions can be drawn : 
On all datasets formulation F2 outperforms formulation F1 in terms of accuracy. The gap between both methods can reach 20\% on MUTA dataset. 
Among the lower bounds, F2LP is the most accurate. However lower bounds results are very data dependent. On GREC dataset, the error committed is less than 5\% while on MUTA dataset errors can reach 30\%.
A straightforward remark is that GREC seems to be a quite affordable dataset while MUTA is more challenging. In GREC dataset, solving F2LP leads to near-optimal solutions. In the linear programming formulations, topological constraints of the models are easy to be satisfied. 
The vertices matching constraints (Eq \ref{eq:sumx1}) of having one vertex of $G_1$ matched to only one vertex of $G_2$ fall apart. Solving the continuous relaxation with continuous variables lead to a multivalent matching. The quality of the solution is then mainly supported by the objective function.
The objective function helps at guiding the exploration of the search space. This strengthen the fact that attributes are meaningful and play a more important role than the topology in GREC. Among the methods from the literature, BS is the most accurate except on ILPISO dataset. This comment can be explained due to the directed aspect of the graph involved. Directed edges can be seen as more stringent constraints on the topology. Topology may impact significantly the first branching decisions of the beam search algorithm at the expense of attributes. Beam search back tracking capability is reduced by truncation of the search search space that prevents the method to get back on better branches. 
 A* is probably the worst method when graphs are larger than 10 vertices its error becomes very high (i.e more than 30\%). A* cannot converge to the optimality because of memory saturation phenomenon. The list OPEN containing pending solutions to be expanded grows exponentially according to the graph size. The bipartite heuristic fails to prune the search tree efficiently. To conclude on deviation, the bigger the graphs, the higher the error made by all the methods. Approximate methods may work poorly in cases where neighborhoods do not allow to easily differentiate the partial solutions. Among all approximate methods, F2LP is the most accurate. In average, 6\% more accurate than the second best approximate method which is BS.

\begin{figure}
  \centering
     \subfigure[GREC]{\includegraphics[width=.22\textwidth]{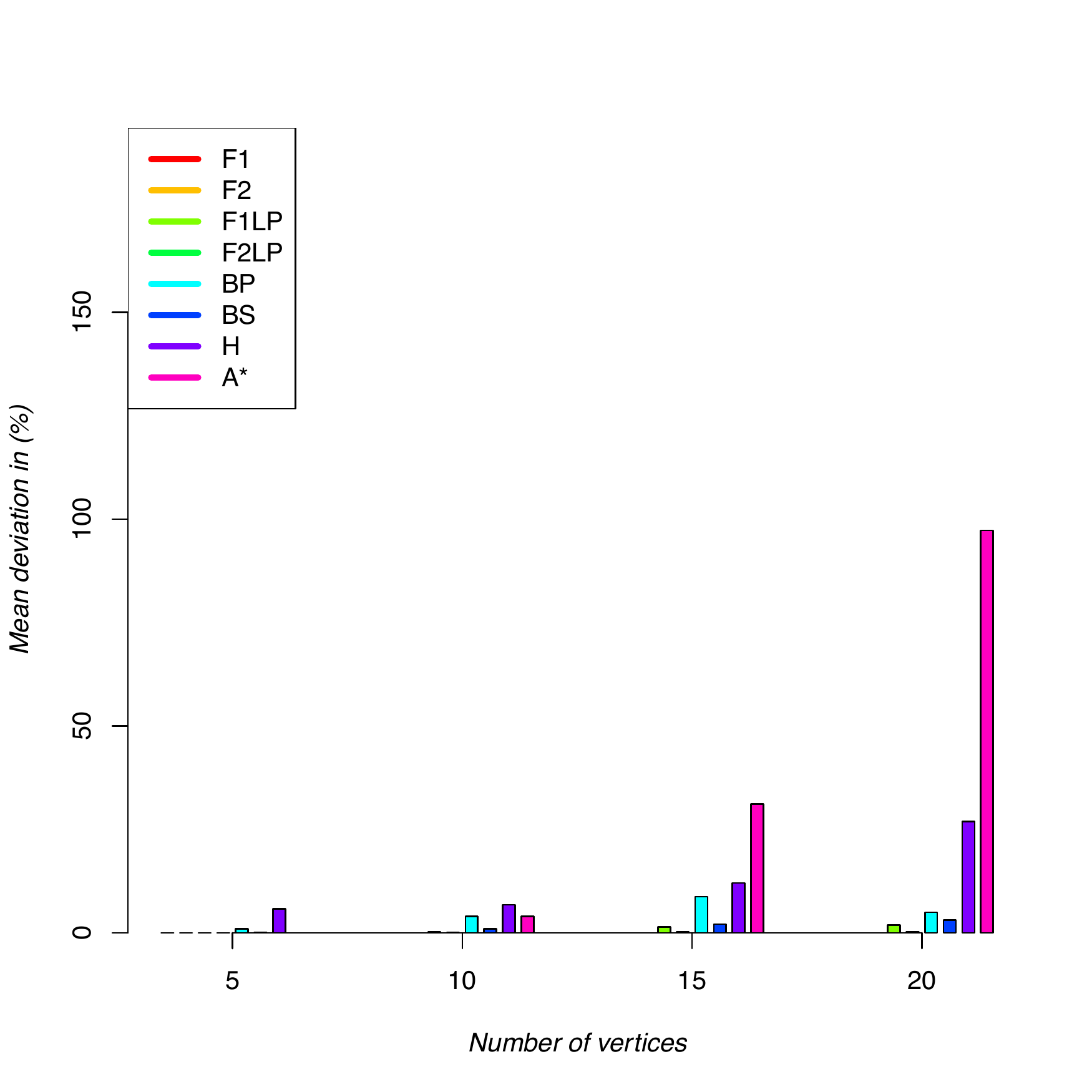}}
    \subfigure[MUTA]{\includegraphics[width=.22\textwidth]{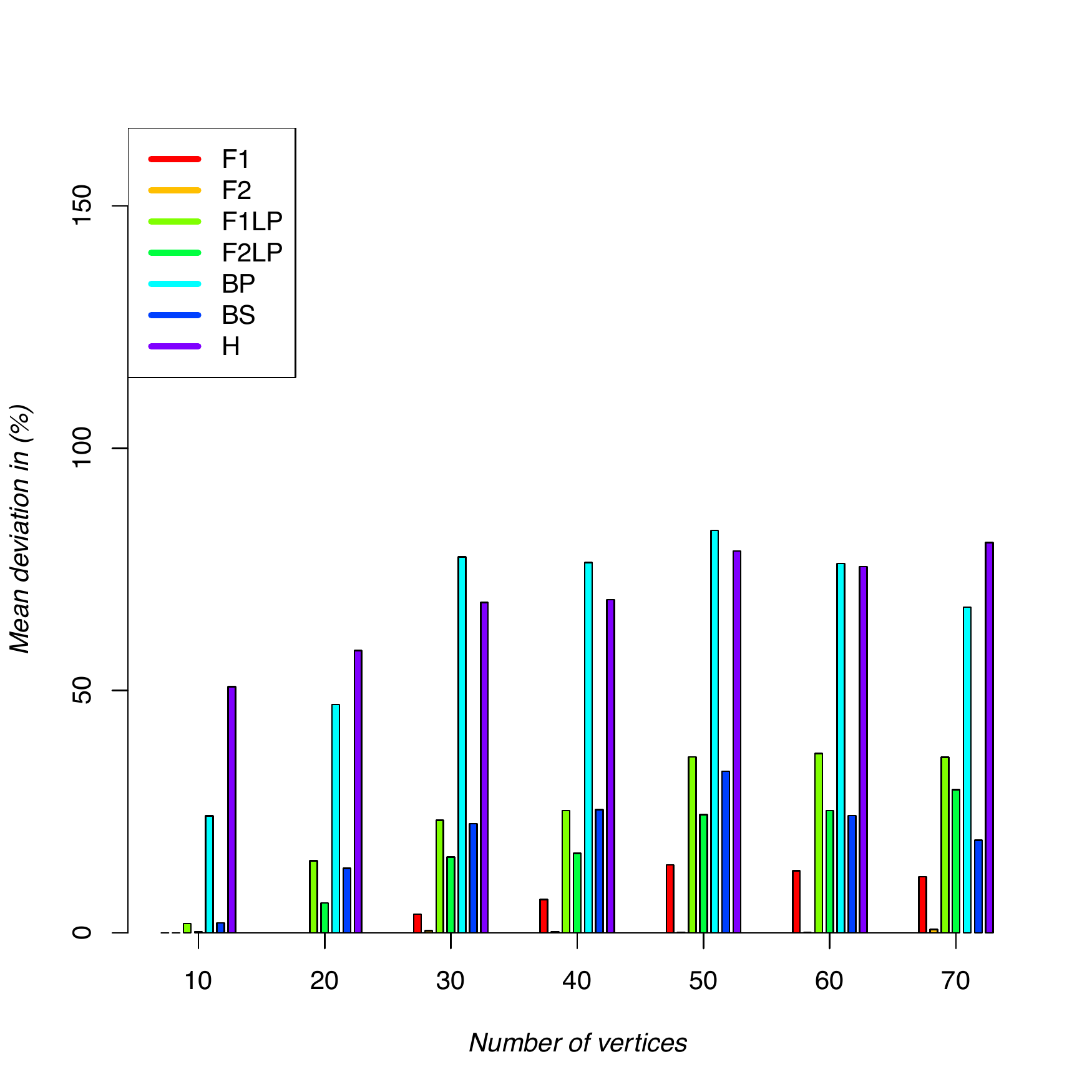}}
 \subfigure[PROT]{\includegraphics[width=.22\textwidth]{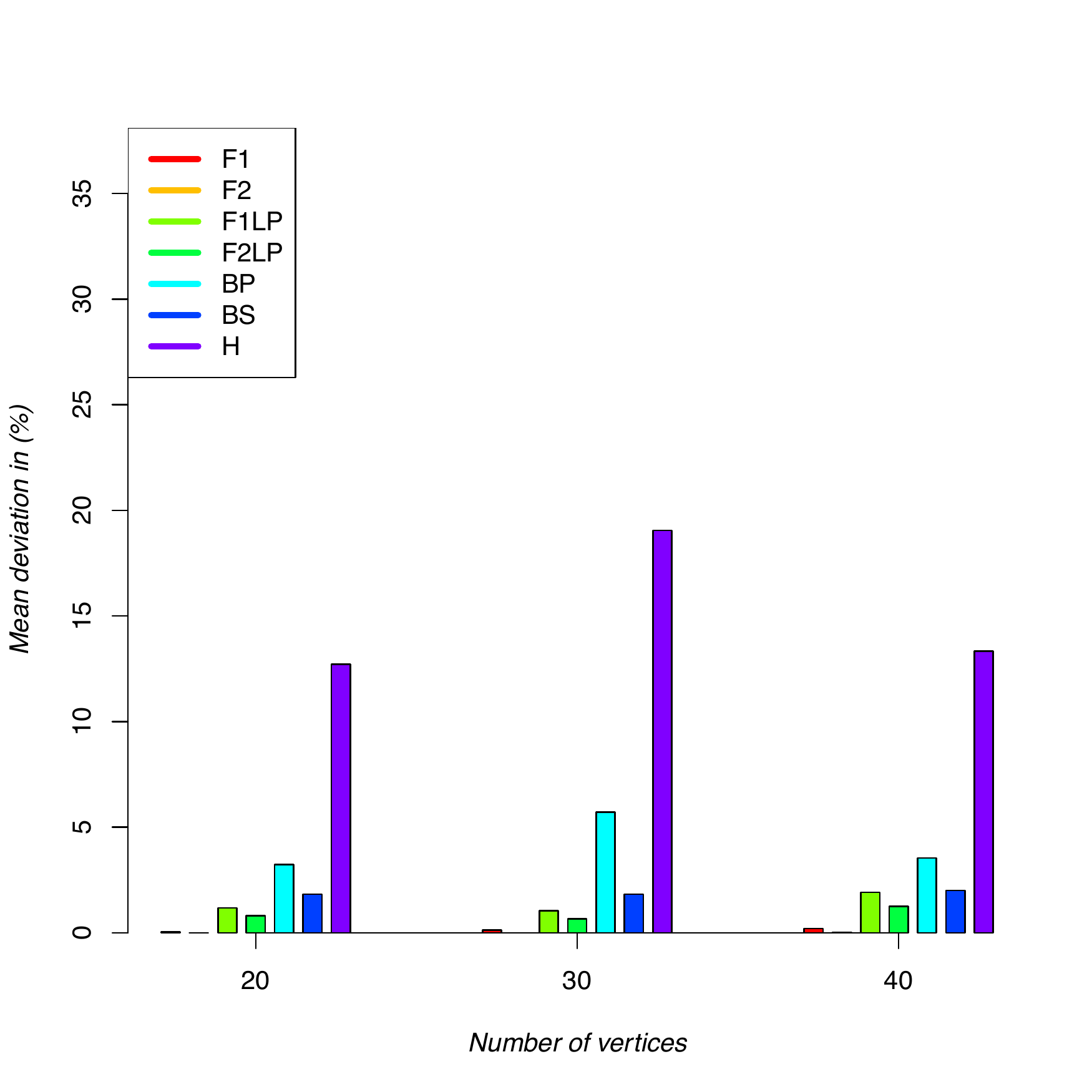}} 
 \subfigure[ILPISO]{\includegraphics[width=.22\textwidth]{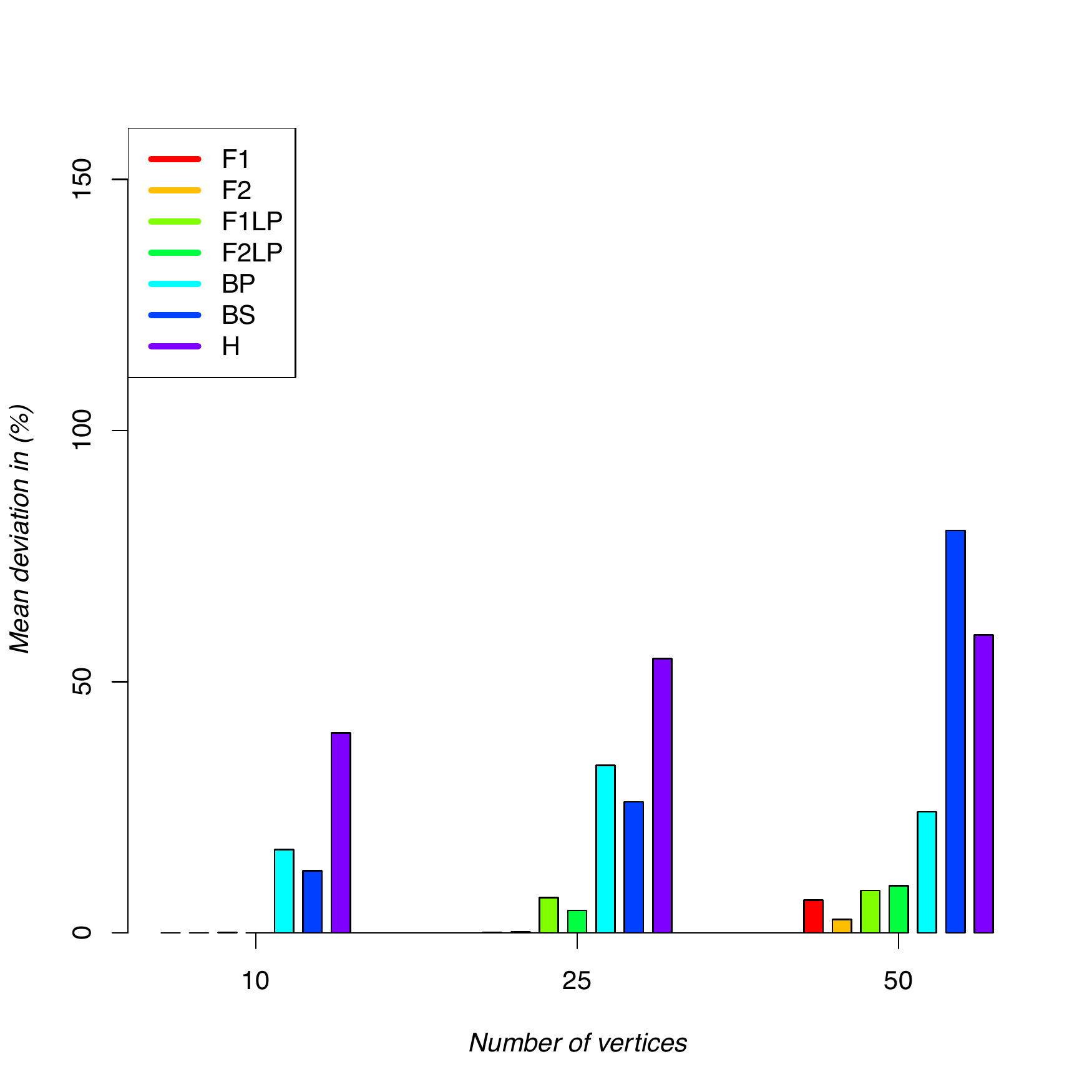}} 
    
 \caption{Mean deviations}
	\label{meandeviation300s}
\end{figure}
In figure \ref{meantime300s}, the average time to compute a graph comparison is depicted for each method. Between F1 and F2 formulations, F2 is always faster than F1. On GREC, F2 can be 100 times faster than F1. On the other hand, the bigger the graphs, the tighter the difference is. In fact, as graphs get bigger, more time is required to solve the problem. When the graph size exceeds 30 vertices the speed of both formulations tends to be similar and the time limit is reached. However in the meantime, F2 would reach a better solution. H and BP are by far the fastest methods. The speed gap with formulation F2 can reach a factor 1000 on MUTA set when graphs get larger. On the other hand, at the scale of exact methods speed, H and BP provide comparable speed results. Finally, A* is the slowest method due to its intensive use of dynamic memory allocation, the best-first search and a misleading bipartite heuristic. 

\begin{figure}
  \centering
    \subfigure[GREC]{\includegraphics[width=.22\textwidth]{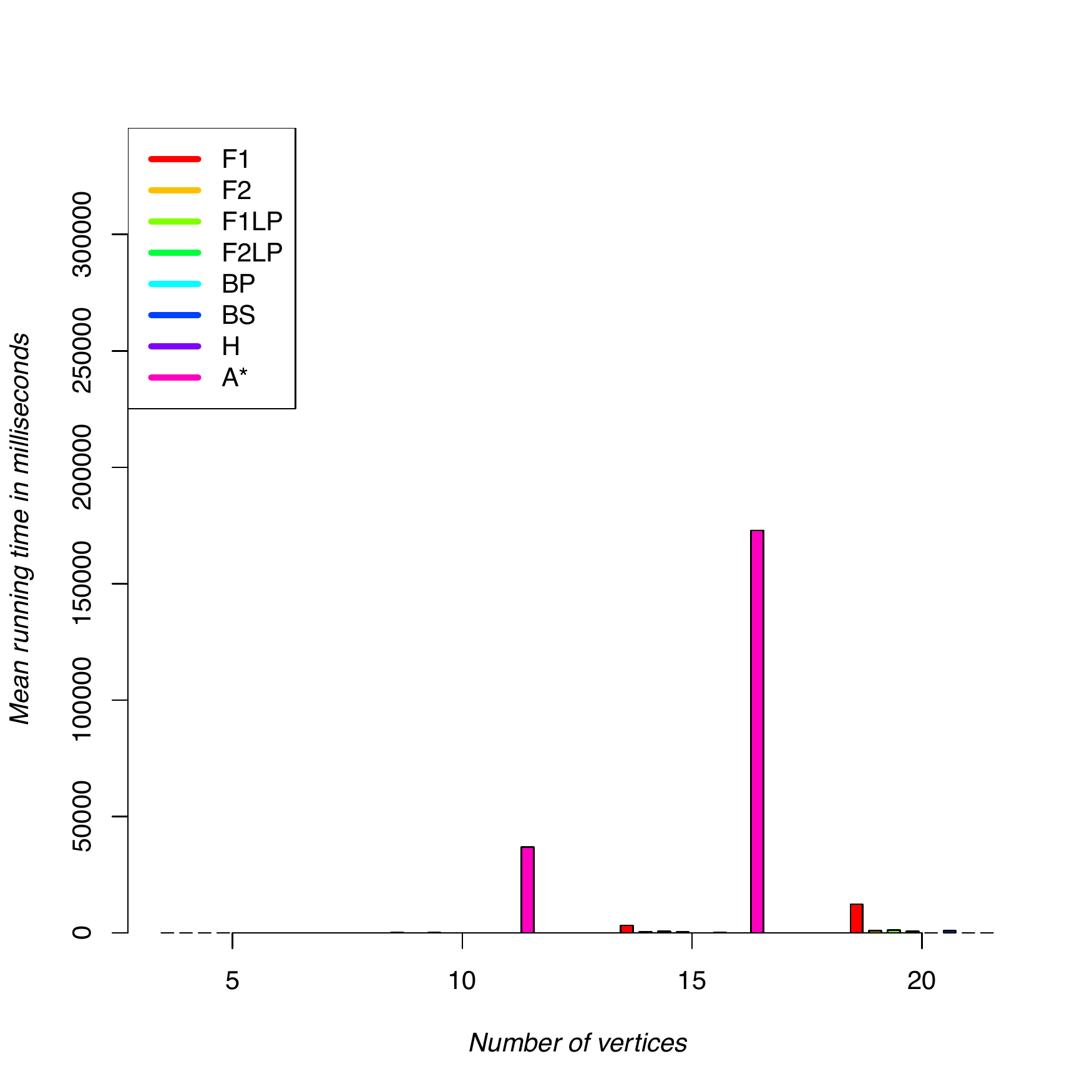}}
    \subfigure[MUTA]{\includegraphics[width=.22\textwidth]{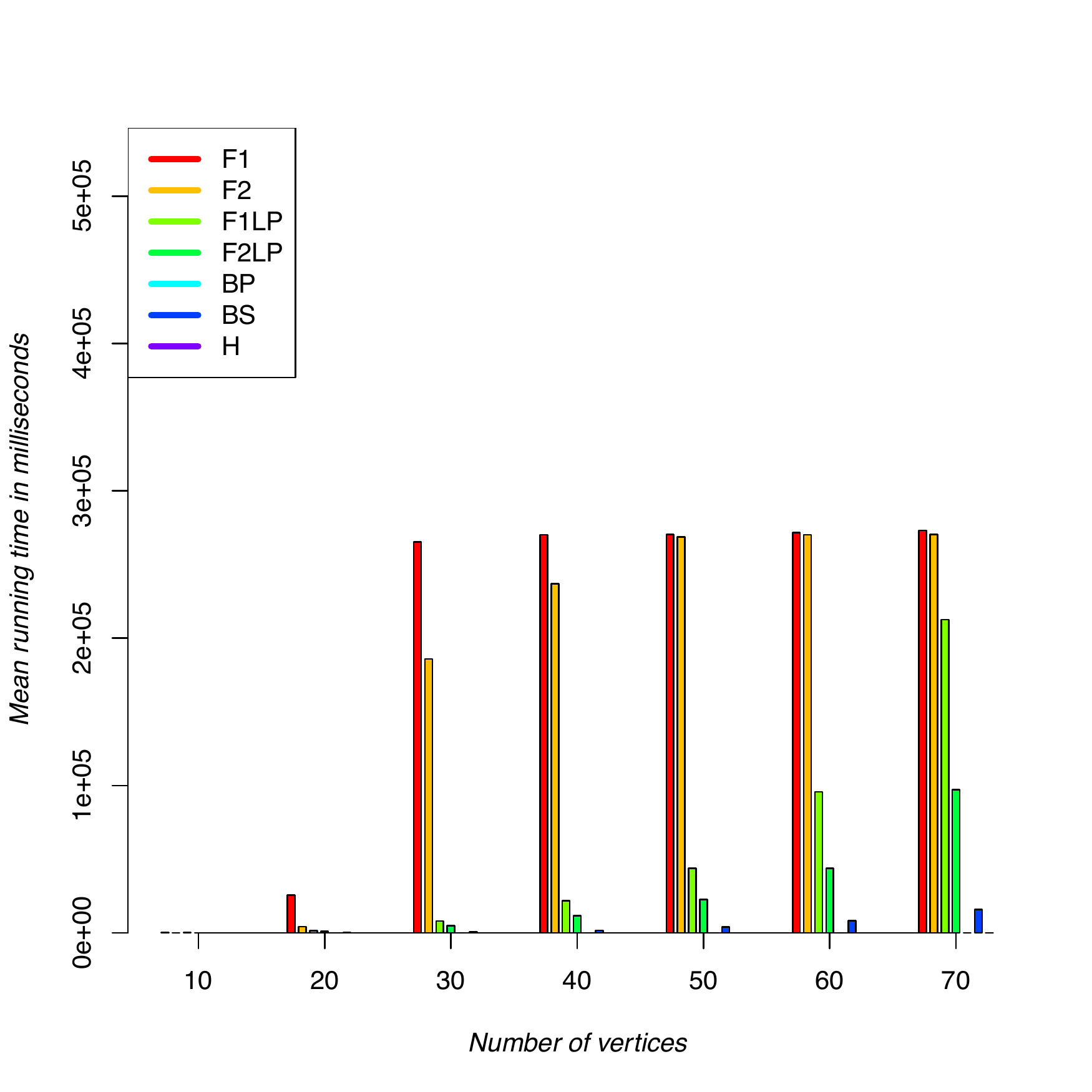}}
 \subfigure[PROT]{\includegraphics[width=.22\textwidth]{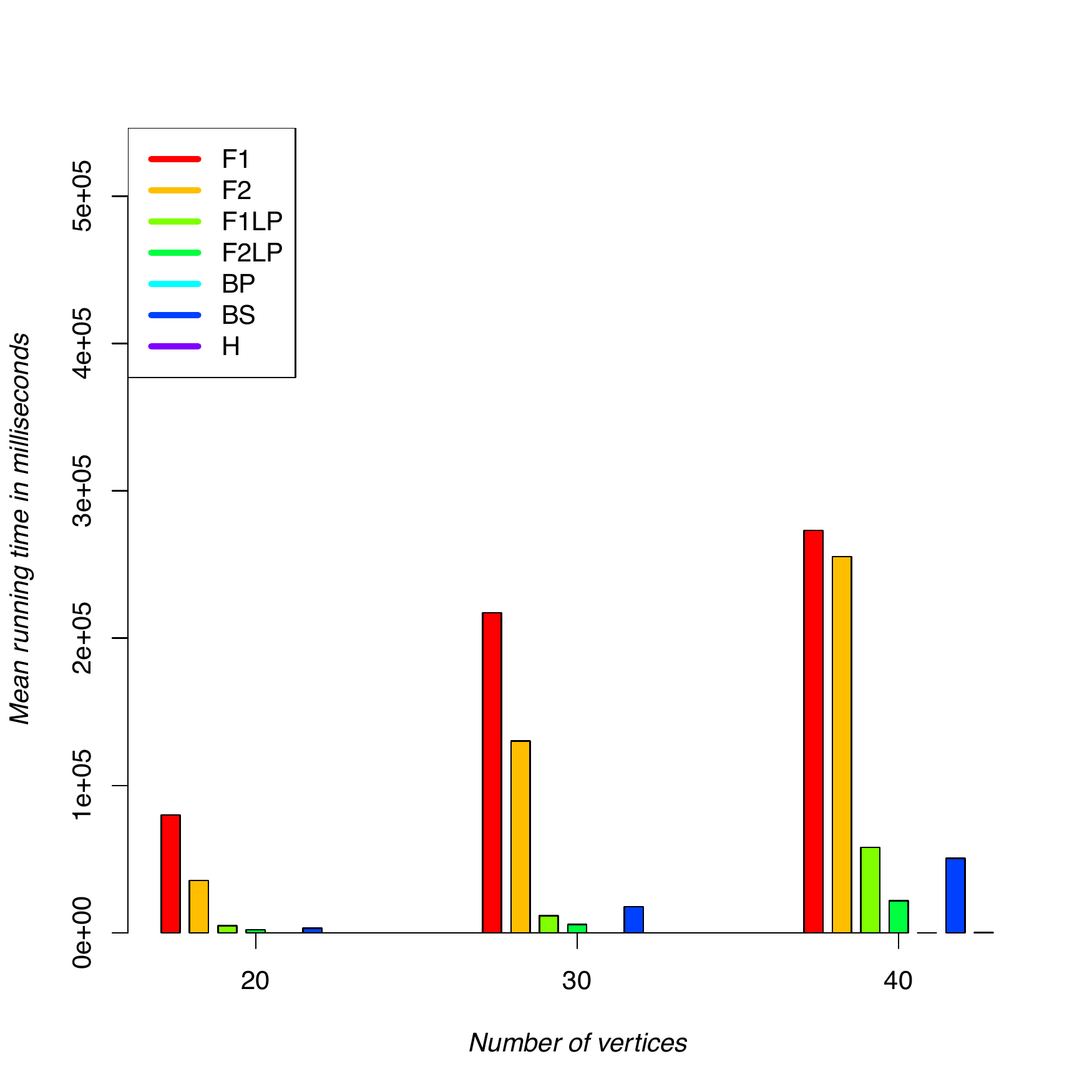}} 
  \subfigure[ILPISO]{\includegraphics[width=.22\textwidth]{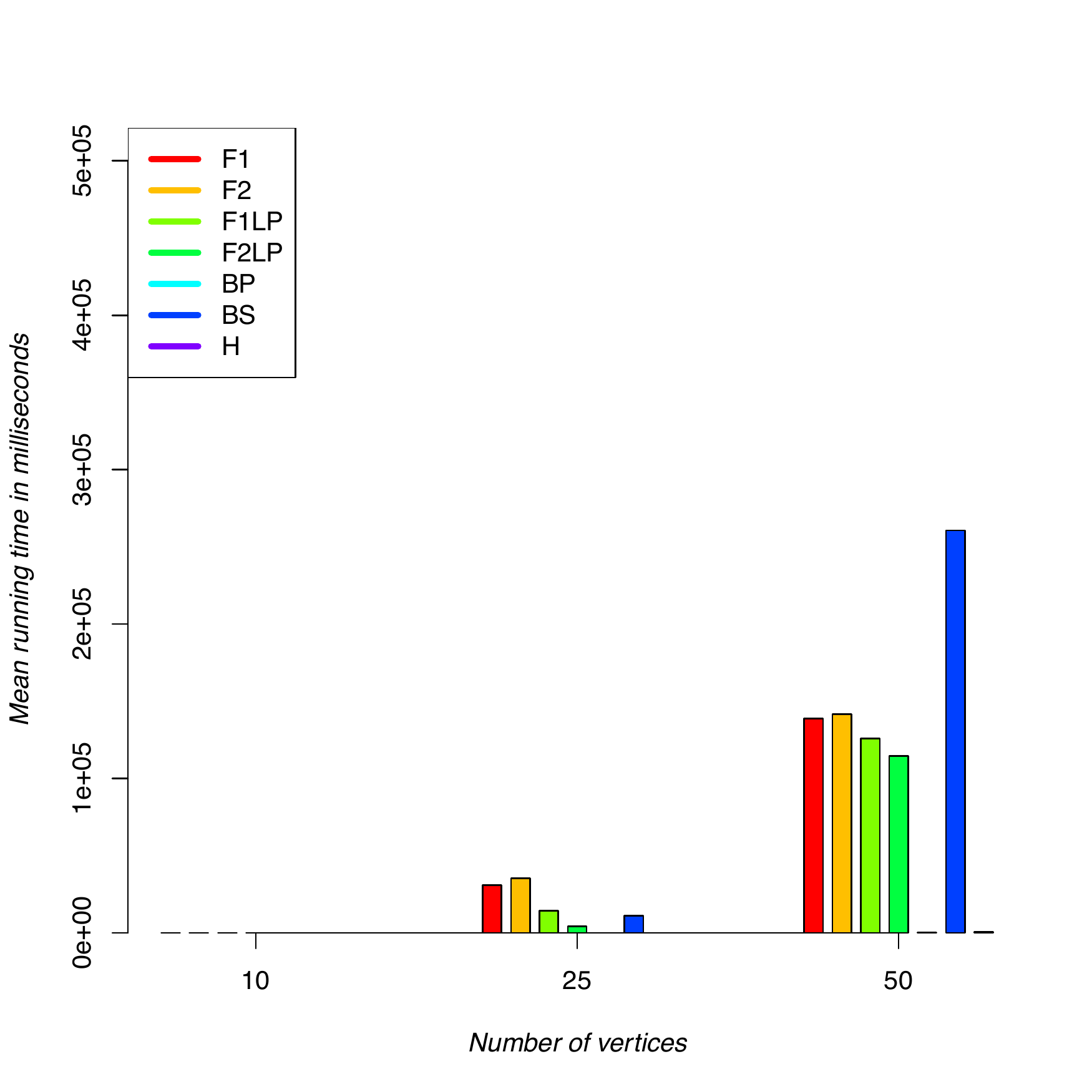}} 
    
 \caption{Mean computation times}
	\label{meantime300s}
\end{figure}


\begin{figure}
  \centering
    \subfigure[GREC]{\includegraphics[width=.22\textwidth]{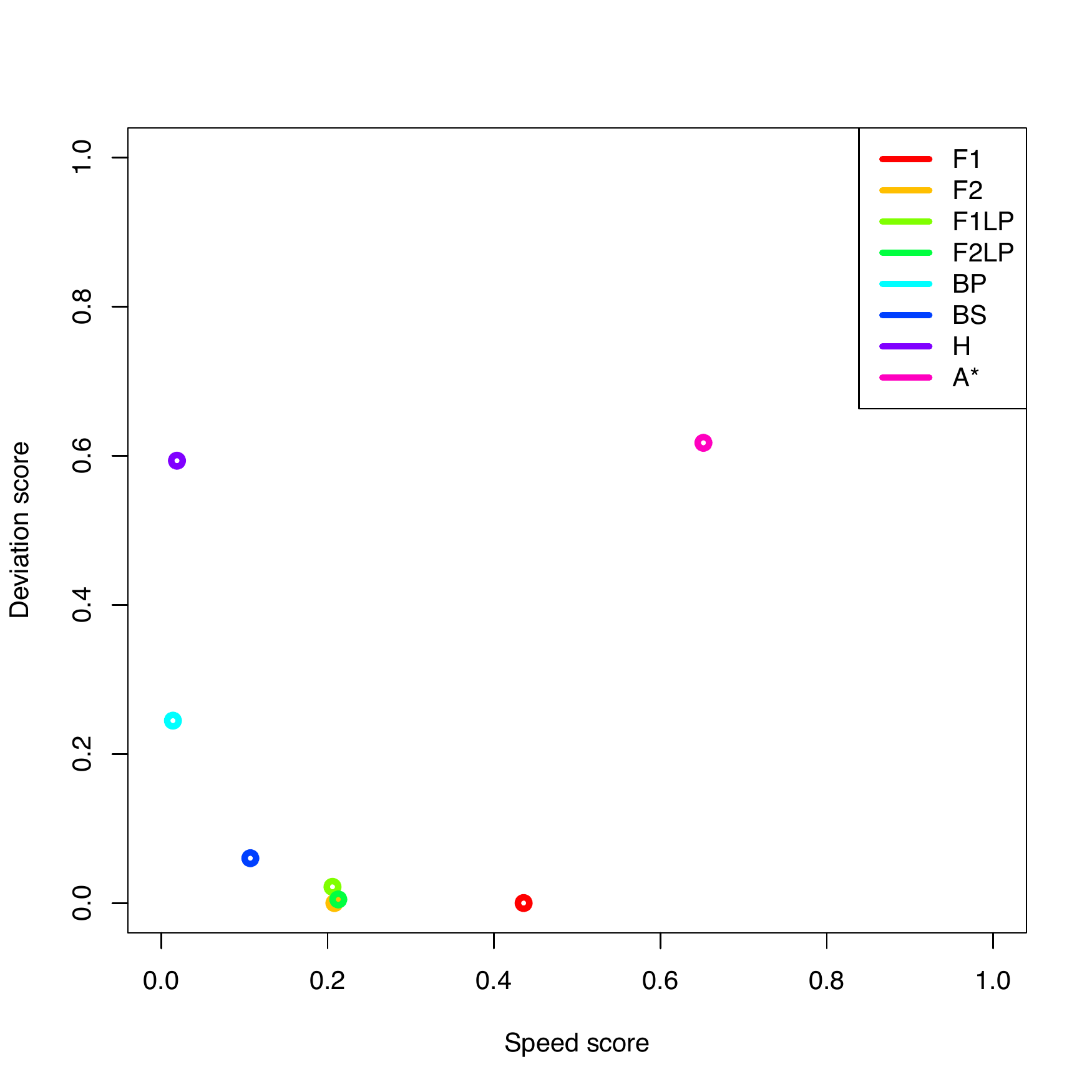}}
    \subfigure[MUTA]{\includegraphics[width=.22\textwidth]{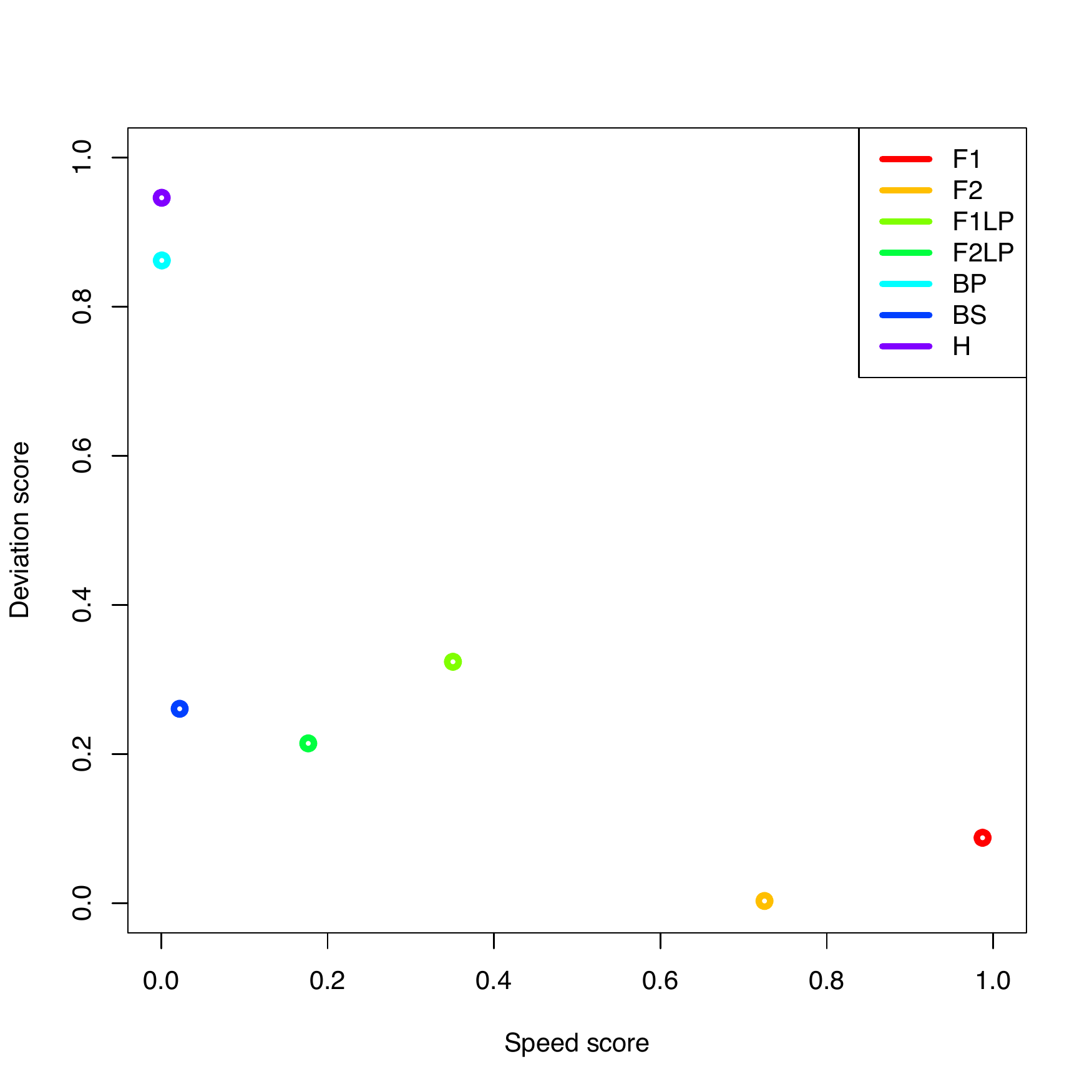}}
    \subfigure[PROT]{\includegraphics[width=.22\textwidth]{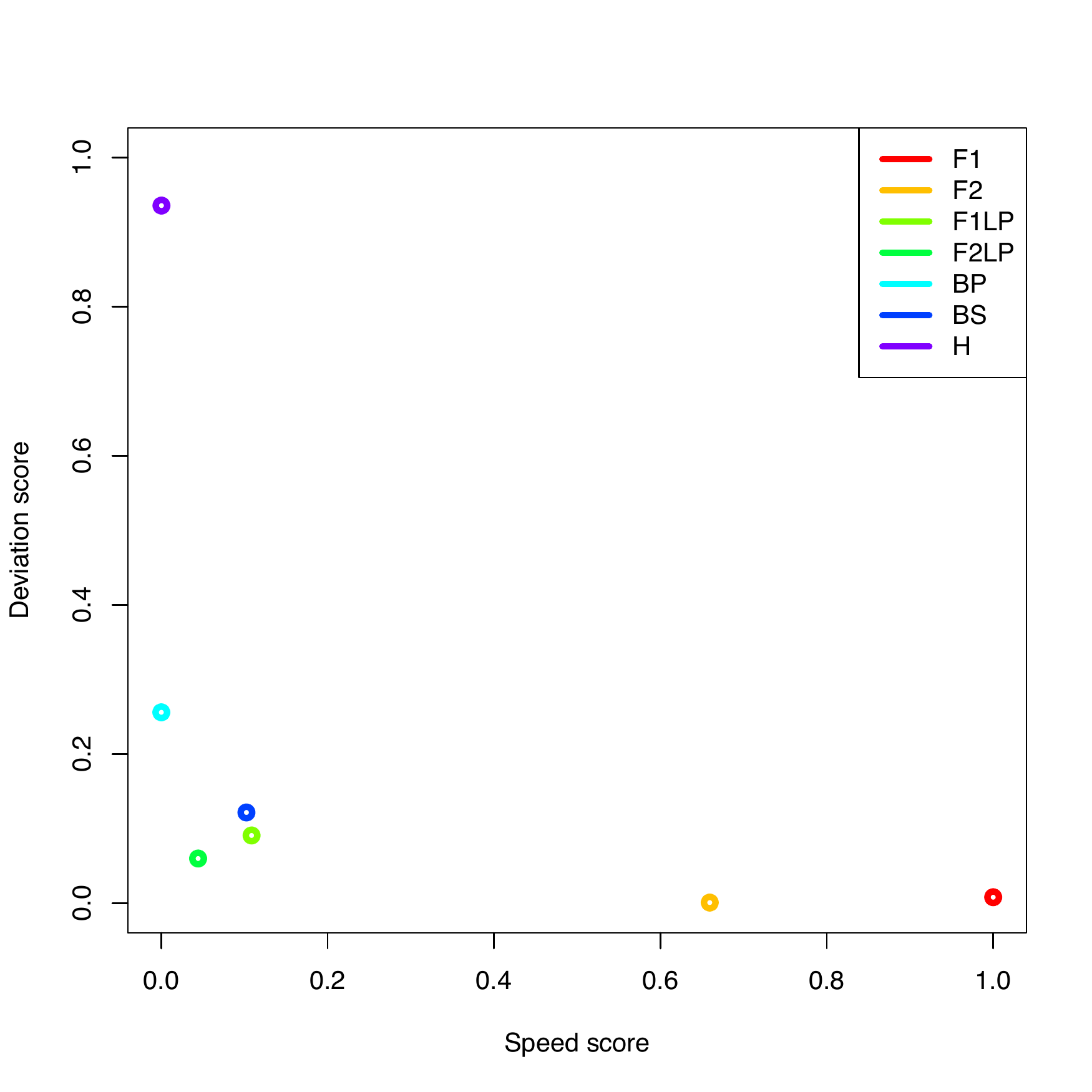}}
        \subfigure[ILPISO]{\includegraphics[width=.22\textwidth]{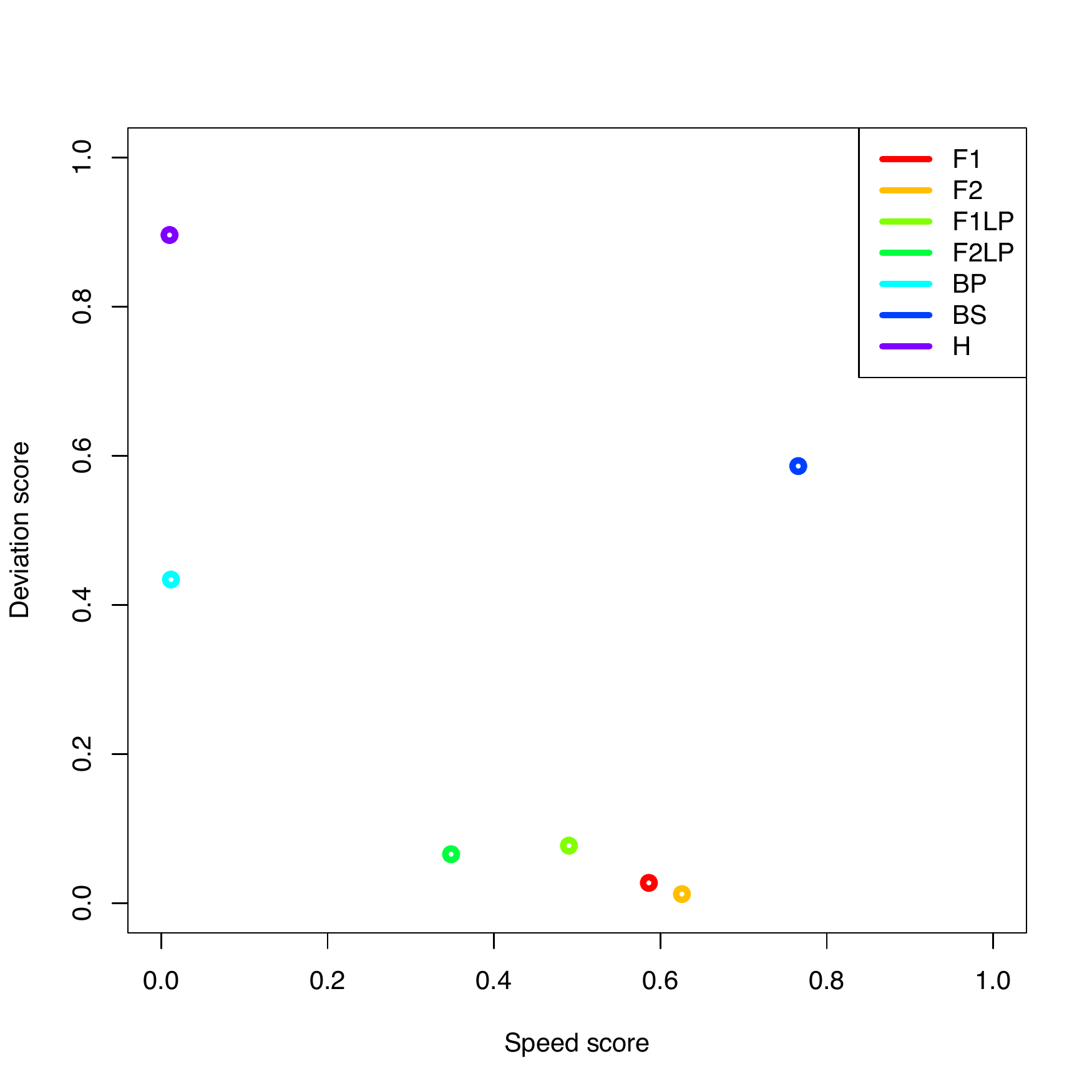}}
     \subfigure[ALL datasets]{\includegraphics[width=.22\textwidth]{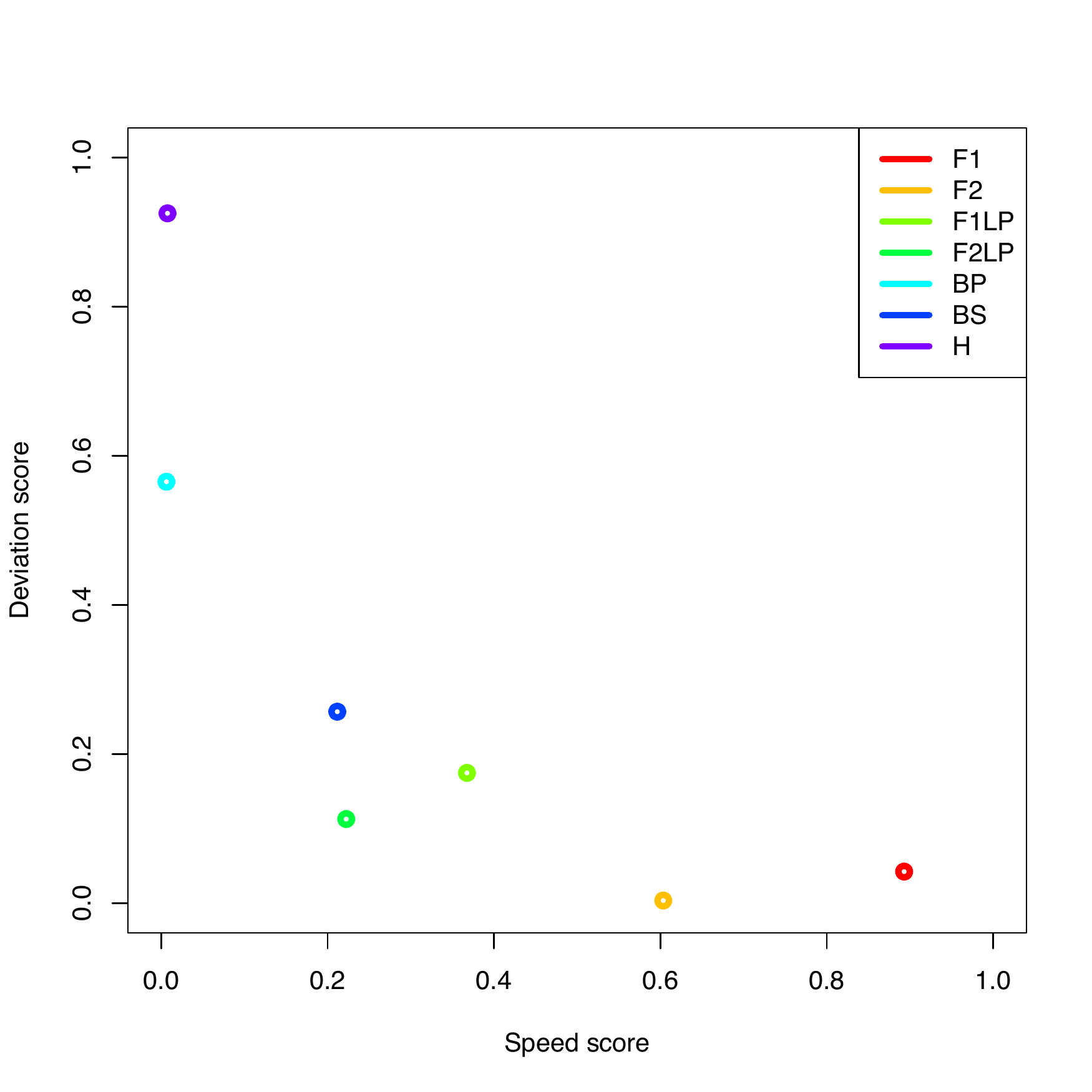}}
 \caption{A synthesis on deviation and time complexity. Lowest deviation and speed time scores are the best. Sub-figure \textit{e} is obtained by merging GREC, PROT, MUTA and ILPISO results.}
	\label{speedaccuracy300s}
\end{figure}

To sum up advantages and drawbacks, each method is projected on a two-dimensional space ($\mathbb{R}^2$) by using speed score and deviation score features defined in equations \ref{equ:speedscore} and \ref{equ:devscore}. Speed and deviation are two concurrent criteria to be minimized. Figure \ref{speedaccuracy300s} illustrates the methods projected in the speed-deviation space. A* method can be categorized as a dominated method since it does not outperform any other methods on either deviation or speed criterion. 
Methods behave differently according to the datasets. \emph{There's no such thing as a free lunch}\footnote{A quote from the economist Milton Friedman} : error-tolerant matching is an NP-hard problem and no methods can fit all problems. 
A quantitative analysis is proposed in tables \ref{tab:deviationgap} and \ref{tab:timefactor}. In table \ref{tab:deviationgap}, deviation gaps between methods are presented while in table \ref{tab:timefactor}, the time ratio between methods are depicted.
Generally speaking, mathematical models seems to be quite accurate and outperform in this way other methods from the literature. F2 outperforms the other methods on all datasets in term of accuracy.
Among approximate method, F2LP is the most precise heuristic. The gap between F2 and F2LP is about 7\%. The most challenging conventional method is BS which is 5 times faster than F2 but in average over all the datasets, F2 is 15\% more accurate than BS. 
On the other hand, BP and H are the fastest methods and any instance can be solved in less than three seconds. Among approximate methods, F2LP is 8\% more accurate than BS in average the reverse side of the medal is an extra amount of time of 13\% in average.


\begin{table}
\begin{tabular}{|c||c|c|c|c|c|c|c|}
\hline 
\% &F1&F2&F1LP&F2LP&BP&BS&H\\ \hline  
\hline  
F1&0&3&8.3&4.6&29.5&12.6&39.7\\ \hline  
F2&3&0&11.4&7.7&32.5&15.7&42.8\\ \hline  
F1LP&8.3&11.4&0&3.7&21.1&4.3&31.4\\ \hline  
F2LP&4.6&7.7&3.7&0&24.8&8&35.1\\ \hline  
BP&29.5&32.5&21.1&24.8&0&16.8&10.3\\ \hline  
BS&12.6&15.7&4.3&8&16.8&0&27.1\\ \hline  
H&39.7&42.8&31.4&35.1&10.3&27.1&0\\  
\hline 
\end{tabular} 
\caption{Mean deviation gap between methods in percentage over all the databases.}
\label{tab:deviationgap}
\end{table}

\begin{table}
\begin{tabular}{|c||c|c|c|c|c|c|c|}
\hline 
/&F1&F2&F1LP&F2LP&BP&BS&H\\ \hline  \hline 
F1&1&0.86&0.28&0.16&0&0.18&0\\ \hline  
F2&1.16&1&0.33&0.18&0&0.2&0\\ \hline  
F1LP&3.55&3.06&1&0.55&0&0.62&0\\ \hline  
F2LP&6.44&5.54&1.81&1&0&1.13&0\\ \hline  
BP&4648.3&4002&1309&721.8&1&817.06&1.8\\ \hline  
BS&5.69&4.9&1.6&0.88&0&1&0\\ \hline  
H&2514.2&2164.7&708&390.4&0.54&441.9&1\\ \hline  

\end{tabular} 
\caption{Mean time factor between methods over all the databases.}
\label{tab:timefactor}
\end{table}

\section{Conclusion}
\label{sec:conclusion}
In this paper, two exact binary linear programming formulations of the graph edit distance problem have been presented. Both formulations can deal with wide range of attributed relational graphs : directed or undirected graphs, simple graphs or multigraphs, with a combination of symbolic, numeric and/or string attributes on vertices and edges. The first formulation (F1) is a didactic expression of the GED problem, while (F2) is a more refined program where variables and constraints have been condensed to reduce the search space.
From both exact models, two lower bounds (F1LP) and (F2LP) have been derived by continuous relaxation of binary variables. Models were solved by CPLEX solver based on branch-and-cut techniques and the interior point method. Formulations were evaluated on four publicly available databases. In all cases, (F1) and (F1LP) are slower and less accurate than (F2) and (F2LP) respectively. This result validates (F2) and the choice of reducing the number of variables and constraints. (F2) is 15\% more accurate in average than the best method from the literature.
Among approximate methods, (F2LP) is 8\% more accurate than BeamSearch (BS) in average the reverse side of the medal is an extra amount of time of 13\% in average. To take the stock, the choice of a method to solve a problem is a trade-off between speed and accuracy.
In perspective, quadratic programming solvers are getting more and more efficient and we want to investigate the definition of binary quadratic programming formulations of the graph edit distance problem. Finally, another interesting work will be to use lower and upper bounds to build an optimized nearest neighbour search.

\begin{appendices}
\section{Extension to undirected graphs}
\label{appendix:undirected}

Suppose that $G_{1}$ and $G_{2}$ are undirected graphs, i.e. their edges have no orientation. The notations $ij$ and $ji$ refer to the same edge of $E_1$, so do $kl$ and $lk$ in $E_2$. This new assumption leads to revise the constraints of the formulation given for directed graphs.

Considering that (F2) has been shown to be more effective than (F1), we only give (F2u), the formulation dedicated to compute graph edit distance between undirected graphs adapted from (F2). The modifications consist in rewritting the sets of constraints \eqref{eq:topology_3} and \eqref{eq:topology_4} into \eqref{eq:topology_3u}. Indeed, given an edge $ij \in E_1$ and a vertex $k \in V_2$
, there is at most one edge incident to $k$ 
that can be matched to $ij$. Moreover $x_{i,k}$ and $x_{j,k}$ 
can not be simultaneously equal to 1, so the sum $x_{i,k} + x_{j,k}$ is at most equal to 1.

\begin{subequations}
\begin{center}
(F2u)
\end{center}
\begin{equation}
\begin{aligned}
  \min_{\mathbf{x,y}} \Biggl( &\sum_{i \in V_1}\sum_{k \in V_2} \Bigl(c(i \rightarrow k) - c(i \rightarrow \epsilon) - c(\epsilon \rightarrow k)\Bigr) \cdot x_{i,k} \\
			    + &\sum_{ij \in E_1}\sum_{kl \in E_2} \Bigl(c(ij \rightarrow kl) - c(ij \rightarrow \epsilon) \\ 
			    & - c(\epsilon \rightarrow kl)\Bigr) \cdot y_{ij,kl} + C \Biggr)
\end{aligned} 
\end{equation}
subject to
\begin{align}
  &\sum_{k \in V_2} x_{i,k} \leq 1 \quad \forall i \in V_1\\
  &\sum_{i \in V_1} x_{i,k} \leq 1 \quad \forall k \in V_2\\
  &\sum_{kl \in E_2} y_{ij,kl} \leq x_{i,k} + x_{j,k} \quad \forall k \in V_2, \forall ij \in E_1 \label{eq:topology_3u}
\end{align}
with
\begin{align}
  &x_{i,k} \in \{0, 1\} \quad \forall (i, k) \in V_1 \times V_2\\
  &y_{ij,kl} \in \{0, 1\} \quad \forall (ij, kl) \in E_1 \times E_2
\end{align}
\end{subequations}

\end{appendices}

\section*{Acknowledgements}
This work was supported by grants from the ANR-11-JS02-010 Lemon. The authors would like to thank Pierre Le Bodic for its contributions to this work.

\bibliographystyle{unsrt} 
\bibliography{biblio}

\begin{thebibliography}{10}

\bibitem{graphMining_Zou}
Zhaonian Zou, Jianzhong Li, Hong Gao, and Shuo Zhang.
\newblock Mining frequent subgraph patterns from uncertain graph data.
\newblock {\em Knowledge and Data Engineering, IEEE Transactions on},
  22(9):1203--1218, Sept 2010.

\bibitem{graphClusteringChen_Djidjev}
Hristo~N. Djidjev and Melih Onus.
\newblock Scalable and accurate graph clustering and community structure
  detection.
\newblock {\em Parallel and Distributed Systems, IEEE Transactions on},
  24(5):1022--1029, May 2013.

\bibitem{graphClassification_Wu}
J.~Wu, S.~Pan, X.~Zhu, and Z.~Cai.
\newblock Boosting for multi-graph classification.
\newblock {\em Cybernetics, IEEE Transactions on}, PP(99):1--1, 2014.

\bibitem{Kernel_Foggia}
Pasquale Foggia and Mario Vento.
\newblock Graph embedding for pattern recognition.
\newblock In {\em Proceedings of the 20th International Conference on
  Recognizing Patterns in Signals, Speech, Images, and Videos}, ICPR'10, pages
  75--82, Berlin, Heidelberg, 2010. Springer-Verlag.

\bibitem{Kernel_Raviv}
Dan Raviv, Ron Kimmel, and Alfred~M. Bruckstein.
\newblock Graph isomorphisms and automorphisms via spectral signatures.
\newblock {\em IEEE Transactions on Pattern Analysis and Machine Intelligence},
  35(8):1985--1993, 2013.

\bibitem{ExplicitEmbedding_Kramer}
Stefan Kramer and Luc~De Raedt.
\newblock Feature construction with version spaces for biochemical
  applications.
\newblock In {\em Proceedings of the Eighteenth International Conference on
  Machine Learning}, ICML '01, pages 258--265, San Francisco, CA, USA, 2001.
  Morgan Kaufmann Publishers Inc.

\bibitem{ExplicitEmbedding_Sidere}
N.~Sid\`ere, Pierre H\'eroux, and J.-Y. Ramel.
\newblock Vector representation of graphs: Application to the classification of
  symbols and letters.
\newblock In {\em Proceedings of the Internation Conference on Document
  Analysis and Recognition}, pages 681--685, 2009.

\bibitem{SpectralEmbedding_Ren}
Peng Ren, R.C. Wilson, and E.R. Hancock.
\newblock Graph characterization via ihara coefficients.
\newblock {\em Neural Networks, IEEE Transactions on}, 22(2):233--245, Feb
  2011.

\bibitem{SpectralEmbedding_Shi}
Jianbo Shi and Jitendra Malik.
\newblock Normalized cuts and image segmentation.
\newblock {\em IEEE Trans. Pattern Anal. Mach. Intell.}, 22(8):888--905, August
  2000.

\bibitem{MCS_Bunke}
Horst Bunke and Kim Shearer.
\newblock A graph distance metric based on the maximal common subgraph.
\newblock {\em Pattern Recogn. Lett.}, 19(3-4):255--259, March 1998.

\bibitem{MCS_Fernandez}
Mirtha-Lina Fernández and Gabriel Valiente.
\newblock A graph distance metric combining maximum common subgraph and minimum
  common supergraph.
\newblock {\em Pattern Recognition Letters}, 22(6–7):753 -- 758, 2001.

\bibitem{Matching_Umeyama}
S.~Umeyama.
\newblock An eigendecomposition approach to weighted graph matching problems.
\newblock {\em Pattern Analysis and Machine Intelligence, IEEE Transactions
  on}, 10(5):695--703, Sep 1988.

\bibitem{Matching_Gold}
S.~Gold and A.~Rangarajan.
\newblock A graduated assignment algorithm for graph matching.
\newblock {\em Pattern Analysis and Machine Intelligence, IEEE Transactions
  on}, 18(4):377--388, Apr 1996.

\bibitem{Matching_Wyk}
Barend~J. van Wyk and Michael~A. van Wyk.
\newblock A pocs-based graph matching algorithm.
\newblock {\em IEEE Trans. Pattern Anal. Mach. Intell.}, 26(11):1526--1530,
  November 2004.

\bibitem{Bunke:1983:IGM:2305869.2306079}
H~Bunke and G~Allermann.
\newblock Inexact graph matching for structural pattern recognition.
\newblock {\em Pattern Recogn. Lett.}, 1(4):245--253, May 1983.

\bibitem{DBLP:journals/cviu/RaveauxAHT11}
Romain Raveaux, S{\'e}bastien Adam, Pierre H{\'e}roux, and {\'E}ric Trupin.
\newblock Learning graph prototypes for shape recognition.
\newblock {\em Computer Vision and Image Understanding}, 115(7):905--918, 2011.

\bibitem{Riesen:2007:GEV:1769371.1769413}
Kaspar Riesen, Michel Neuhaus, and Horst Bunke.
\newblock Graph embedding in vector spaces by means of prototype selection.
\newblock In {\em Proceedings of the 6th IAPR-TC-15 International Conference on
  Graph-based Representations in Pattern Recognition}, GbRPR'07, pages
  383--393, Berlin, Heidelberg, 2007. Springer-Verlag.

\bibitem{Bunke2012811}
Horst Bunke and Kaspar Riesen.
\newblock Towards the unification of structural and statistical pattern
  recognition.
\newblock {\em Pattern Recognition Letters}, 33(7):811--825, 2012.

\bibitem{DBLP:conf/gbrpr/RiesenNB07}
Kaspar Riesen, Michel Neuhaus, and Horst Bunke.
\newblock Bipartite graph matching for computing the edit distance of graphs.
\newblock In {\em Graph-Based Representations in Pattern Recognition, 6th
  IAPR-TC-15 International Workshop, GbRPR 2007, Alicante, Spain, June 11-13,
  2007, Proceedings}, pages 1--12, 2007.

\bibitem{DBLP:journals/ijprai/FankhauserRBD12}
Stefan Fankhauser, Kaspar Riesen, Horst Bunke, and Peter~J. Dickinson.
\newblock Suboptimal graph isomorphism using bipartite matching.
\newblock {\em IJPRAI}, 26(6), 2012.

\bibitem{DBLP:journals/ivc/RiesenB09}
Kaspar Riesen and Horst Bunke.
\newblock Approximate graph edit distance computation by means of bipartite
  graph matching.
\newblock {\em Image Vision Comput.}, 27(7):950--959, 2009.

\bibitem{DBLP:conf/sspr/NeuhausRB06}
Michel Neuhaus, Kaspar Riesen, and Horst Bunke.
\newblock Fast suboptimal algorithms for the computation of graph edit
  distance.
\newblock In {\em SSPR/SPR}, pages 163--172, 2006.

\bibitem{10.1109/34.862201}
Richard Myers, Richard~C. Wilson, and Edwin~R. Hancock.
\newblock Bayesian graph edit distance.
\newblock {\em IEEE Transactions on Pattern Analysis and Machine Intelligence},
  22(6):628--635, 2000.

\bibitem{DBLP:journals/prl/RaveauxBO10}
Romain Raveaux, Jean-Christophe Burie, and Jean-Marc Ogier.
\newblock A graph matching method and a graph matching distance based on
  subgraph assignments.
\newblock {\em Pattern Recognition Letters}, 31(5):394--406, 2010.

\bibitem{DBLP:conf/mlg/RiesenFB07}
Kaspar Riesen, Stefan Fankhauser, and Horst Bunke.
\newblock Speeding up graph edit distance computation with a bipartite
  heuristic.
\newblock In Paolo Frasconi, Kristian Kersting, and Koji Tsuda, editors, {\em
  Mining and Learning with Graphs, MLG 2007, Firence, Italy, August 1-3, 2007,
  Proceedings}, 2007.

\bibitem{DBLP:conf/gbrpr/FankhauserRB11}
Stefan Fankhauser, Kaspar Riesen, and Horst Bunke.
\newblock Speeding up graph edit distance computation through fast bipartite
  matching.
\newblock In {\em Graph-Based Representations in Pattern Recognition - 8th
  IAPR-TC-15 International Workshop, GbRPR 2011, M{\"u}nster, Germany, May
  18-20, 2011. Proceedings}, pages 102--111, 2011.

\bibitem{citeulike:809181}
D.~Conte, P.~Foggia, C.~Sansone, and M.~Vento.
\newblock {Thirty Years Of Graph Matching In Pattern Recognition}.
\newblock {\em International Journal of Pattern Recognition and Artificial
  Intelligence}, 2004.

\bibitem{livi13}
Lorenzo Livi and Antonello Rizzi.
\newblock The graph matching problem.
\newblock {\em Pattern Analysis and Applications}, 16(3):253--283, 2013.

\bibitem{Gao:2010:SGE:1714377.1714380}
Xinbo Gao, Bing Xiao, Dacheng Tao, and Xuelong Li.
\newblock A survey of graph edit distance.
\newblock {\em Pattern Anal. Appl.}, 13(1):113--129, January 2010.

\bibitem{Zeng09comparingstars:}
Zhiping Zeng, Anthony K.~H. Tung, Jianyong Wang, Jianhua Feng, and Lizhu Zhou.
\newblock Comparing stars: On approximating graph edit distance.
\newblock In {\em Proceedings of the {VLDB} Endowment}, volume~2, pages 25--36,
  2009.

\bibitem{4082128}
P.E. Hart, N.J. Nilsson, and B.~Raphael.
\newblock A formal basis for the heuristic determination of minimum cost paths.
\newblock {\em Systems Science and Cybernetics, IEEE Transactions on},
  4(2):100--107, 1968.

\bibitem{Fischer2015331}
Andreas Fischer, Ching~Y. Suen, Volkmar Frinken, Kaspar Riesen, and Horst
  Bunke.
\newblock Approximation of graph edit distance based on hausdorff matching.
\newblock {\em Pattern Recognition}, 48(2):331 -- 343, 2015.

\bibitem{Almohamad:1993:LPA:628301.628477}
H.~A. Almohamad and S.~O. Duffuaa.
\newblock A linear programming approach for the weighted graph matching
  problem.
\newblock {\em IEEE Trans. Pattern Anal. Mach. Intell.}, 15(5):522--525, May
  1993.

\bibitem{Justice:2006:BLP:1155317.1155424}
Derek Justice and Alfred Hero.
\newblock A binary linear programming formulation of the graph edit distance.
\newblock {\em IEEE Trans. Pattern Anal. Mach. Intell.}, 28(8):1200--1214,
  August 2006.

\bibitem{DBLP:dblp_conf/sspr/RiesenFB14}
Kaspar Riesen, Andreas Fischer, and Horst Bunke.
\newblock Improving approximate graph edit distance using genetic algorithms.
\newblock In {\em S+SSPR}, pages 63--72, 2014.

\bibitem{Wilson:1997:SMD:262631.262639}
Richard~C. Wilson and Edwin~R. Hancock.
\newblock Structural matching by discrete relaxation.
\newblock {\em IEEE Trans. Pattern Anal. Mach. Intell.}, 19(6):634--648, June
  1997.

\bibitem{garey79:_comput_and_intrac}
Michael~R. Garey and David~S. Johnson.
\newblock {\em Computers and Intractability: A Guide to the Theory of
  NP-Completeness}.
\newblock W. H. Freeman \& Co., New York, NY, USA, 1979.

\bibitem{Wolsey1998}
L.~A. Wolsey.
\newblock {\em {Integer programming}}.
\newblock Wiley-Interscience, New York, NY, USA, 1998.

\bibitem{saigal95:_linear_progr}
R.~Saigal.
\newblock {\em Linear Programming: A Modern Integrated Analysis}.
\newblock Kluwer Academin Publishers, 1995.

\bibitem{fischer2013fast}
Andreas Fischer, Ching~Y Suen, Volkmar Frinken, Kaspar Riesen, and Horst Bunke.
\newblock A fast matching algorithm for graph-based handwriting recognition.
\newblock In {\em Graph-Based Representations in Pattern Recognition}, pages
  194--203. Springer, 2013.

\bibitem{iamdb}
Kaspar Riesen and Horst Bunke.
\newblock Iam graph database repository for graph based pattern recognition and
  machine learning.
\newblock In Niels Vitoria~Lobo, Takis Kasparis, Fabio Roli, JamesT. Kwok,
  Michael Georgiopoulos, GeorgiosC. Anagnostopoulos, and Marco Loog, editors,
  {\em Structural, Syntactic, and Statistical Pattern Recognition}, volume 5342
  of {\em Lecture Notes in Computer Science}, pages 287--297. Springer Berlin
  Heidelberg, 2008.

\bibitem{LeBodic20124214}
Pierre~Le Bodic, Pierre Héroux, Sébastien Adam, and Yves Lecourtier.
\newblock An integer linear program for substitution-tolerant subgraph
  isomorphism and its use for symbol spotting in technical drawings.
\newblock {\em Pattern Recognition}, 45(12):4214 -- 4224, 2012.

\bibitem{GRECdb}
Philippe Dosch and Ernest Valveny.
\newblock Report on the second symbol recognition contest.
\newblock In Wenyin Liu and Josep Lladós, editors, {\em Graphics Recognition.
  Ten Years Review and Future Perspectives}, volume 3926 of {\em Lecture Notes
  in Computer Science}, pages 381--397. Springer Berlin Heidelberg, 2006.

\bibitem{Riesen:2010:GCC:1855255}
Kaspar Riesen and Horst Bunke.
\newblock {\em Graph Classification and Clustering Based on Vector Space
  Embedding}.
\newblock World Scientific Publishing Co., Inc., River Edge, NJ, USA, 2010.

\bibitem{doi:10.1021/jm040835a}
Jeroen Kazius, Ross McGuire, and Roberta Bursi.
\newblock Derivation and validation of toxicophores for mutagenicity
  prediction.
\newblock {\em Journal of Medicinal Chemistry}, 48(1):312--320, 2005.

\bibitem{4202588}
N.A. Fox, R.~Gross, J.F. Cohn, and R.B. Reilly.
\newblock Robust biometric person identification using automatic classifier
  fusion of speech, mouth, and face experts.
\newblock {\em Multimedia, IEEE Transactions on}, 9(4):701--714, June 2007.

\bibitem{Schomburg:2004nu}
I.~Schomburg, A.~Chang, C.~Ebeling, M.~Gremse, C.~Heldt, G.~Huhn, and
  D.~Schomburg.
\newblock Brenda, the enzyme database: updates and major new developments.
\newblock {\em Nucleic Acids Research}, 32(suppl 1):D431--D433, 2004.

\bibitem{Erdos60onthe}
P.~Erdős and A~Rényi.
\newblock On the evolution of random graphs.
\newblock In {\em Publication of the mathematical institute of the Hungarian
  academy of sciences}, pages 17--61, 1960.

\bibitem{DBLP:journals/scheduling/BaptisteCGT10}
P.~Baptiste, Federico~Della Croce, Andrea Grosso, and Vincent T'Kindt.
\newblock Sequencing a single machine with due dates and deadlines: an
  ilp-based approach to solve very large instances.
\newblock {\em J. Scheduling}, 13(1):39--47, 2010.

\end{thebibliography}

\end{document}